\documentclass[11pt]{article}
\evensidemargin=0pt
\oddsidemargin=0pt
\usepackage{cite}
\usepackage{graphics,amsmath,amssymb,amsthm}
\textheight=21cm
\textwidth=16cm
\newtheorem{theorem}{Theorem}[section]
\newtheorem{proposition}{Proposition}[section]

\newcommand{\be}{\begin{equation}}
\newcommand{\ee}{\end{equation}}
\newtheorem{lemma}{Lemma}[section]
\numberwithin{equation}{section}
\def \K{[\![}
\def \J{]\!]}

\def \ui#1{^{(#1)}}

\def \ui#1{^{(#1)}}

\title{Symmetries  for the Ablowitz-Ladik hierarchy: I. \\Four-potential case
}
\author {Da-jun Zhang\footnote{Corresponding author. E-mail: djzhang@staff.shu.edu.cn},~~Shou-ting Chen
\\
{\small\it Department of Mathematics, Shanghai
University, Shanghai 200444, P.R. China}}
\date{}

\begin{document}
\maketitle

\begin{abstract}
In the paper we first investigate symmetries of  isospectral and
non-isospectral four-potential Ablowitz-Ladik hierarchies. We
express these hierarchies in the form of $u_{n,t}=L^m H^{(0)}$,
where $m$ is an arbitrary integer (instead of a nature number) and
$L$ is the recursion operator. Then by means of the zero-curvature
representations of the isospectral and non-isospectral flows, we
construct symmetries for the isospectral equation hierarchy as well
as non-isospectral equation hierarchy, respectively. The symmetries,
respectively, form two centerless Kac-Moody-Virasoro algebras. The
recursion operator $L$ is proved to be hereditary and a strong
symmetry for this isospectral equation  hierarchy. Besides, we make
clear for the relation between four-potential and two-potential
Ablowitz-Ladik hierarchies. The even order members in the
four-potential Ablowitz-Ladik hierarchies together with their
symmetries and algebraic structures can be reduced to two-potential
case. The reduction keeps invariant for the algebraic structures and
the recursion operator for two potential case becomes $L^2$.

\vskip 5pt

\noindent{\bf Key words:}\quad  Ablowitz-Ladik hierarchies; isospectral and
non-isospectral flows; symmetries; Lie algebra.
\\
\noindent{\bf PACS:}\quad 02.30.Ik, 05.45.Yv
\end{abstract}

\section{Introduction}
As is well known, many physically interesting problems are modeled
by nonlinear differential-difference equations, such as the Toda
lattice, Volterra lattice and discrete nonlinear Schr\"odinger
equation. Since 1970s discrete systems have received considerable
attention from variety of aspects (e.g.,
\cite{AL-75-JMP,Hirota-1970s,Date-1980s,Kuper-book}), such as
Inverse Scattering Transform, bilinear method, Sato's approach,
symmetry analysis and so on. One of famous discrete spectral
problems is given by Ablowitz and Ladik\cite{AL-75-JMP,AL-76-JMP}
which is now referred to as the Ablowitz-Ladik (AL) spectral
problem. This spectral problem, coupled with different time
evolution parts, has provided Lax integrabilities for many discrete
soliton systems, such as integrable discrete nonlinear Schr\"odinger
equation\cite{Ablowitz-04-book}, discrete mKdV equation and so
forth.

There are two types of the AL spectral problems,
which contains two potentials $\{Q_n,R_n\}$ and four potentials $\{Q_n,R_n,S_n,T_n\}$, respectively.
The two-potential one is the direct discretization (cf. \cite{Ablowitz-04-book}) of the famous continuous AKNS-ZS spectral problem\cite{AKNS},
and besides solutions, the related Hamiltonian structures, constraint flows, nonlinearization,
Darboux transformation, conservation laws, symmetries and Lie algebra structures
have been studied (cf. \cite{Geng-1989,Zeng-95-JPA,Ma-Tamizhimani-JPSJ-1999,ZDJ-02-JPA,ZDJ-02-CSF,Geng-03-JMP,Veks-2006,ZDJ-06-PLA,Geng-07-SAM,Gesztesy-08-SAM}).
The four-potential AL spectral problem is more complicated than the two-potential case
because of containing two more potentials and its unsymmetrical matrix form.
Cheng \cite{CY-86} transformed the four-potential AL spectral problem to a bit simple form
which could be further related to the two-potential case, but the relation (connecting two and four potentials) given in \cite{CY-86} is nothing helpful for
discussions of Hamiltonian structures and symmetries.
Recently, based on Cheng's transformation, Geng and Dai \cite{Geng-4p} separated the four-potential
AL spectral matrix into two symmetrical two-potential AL spectral matrices.
This makes it possible to construct a four-potential hierarchy (e.g.\cite{Geng-4p}) and its recursion operator
and then investigate more characteristics of integrability.

Infinite symmetries act as an important characteristic for
integrable systems \cite{Fokas-1987}. Symmetry-analysis is also a
powerful approach to finding exact solutions for nonlinear systems
\cite{Olver-book,Bluman-book}. In this paper we  focus on symmetries
and their Lie algebras for the four-potential isospectral and
non-isospectral AL hierarchies. We will first derive positive and
negative order isospectral and non-isospectral flows. These flows
share a same recursion operator $L$ and can be respectively
uniformed as $K^{(m)}=L^m K^{(0)}$ and $\sigma^{(m)}=L^m
\sigma^{(0)}$, where $m$ is an arbitrary integer (instead of a
nature number in most of cases). Then we can imbed these flows into
their zero-curvature equations by means of functional derivatives.
The resulting expressions, which we refer to as zero-curvature
representations, have been shown to be powerful in constructing
symmetries for Lax integrable systems
(cf.\cite{Ma-Tamizhimani-JPSJ-1999,ZDJ-06-PLA,Ma-1990,Chen-1991,Chen-1996,Chen-2003,MWX-99-JMP}).
We will derive algebraic relations for isospectral and
non-isospectral flows and then derive symmetries and their Lie
algebras for not only isospectral AL hierarchy but also
non-isospectral hierarchy. Both algebras are the type of centerless
Kac-Moody-Virasoro algebra. The recursion operator $L$ is hereditary
and a strong symmetry for the isospectral hierarchy.

A natural question is whether all these structures w.r.t. four potentials $(Q_n,R_n,S_n,T_n)$,
including hierarchies, recursion operator $L$, symmetries and algebras admit a closed reduction
w.r.t. two potentials $(Q_n,R_n)$ by directly taking $(S_n,T_n)=(0,0)$.
This is true for those even order members in the
four-potential isospectral and non-isospectral hierarchies,
and the new recursion operator becomes $L^2$.
We will discuss the reduction in the paper.

This paper is the first part of our series investigations which
consist of two parts. In Part II we will focus on symmetries of the
integrable discrete nonlinear Schr\"odinger equation and discrete
AKNS hierarchy. The integrable discrete nonlinear Schr\"odinger
equation consists of positive and negative order flows which
correspond to a central-difference discretization for a continuous
second order derivative. We will also give a recursion operator
which generates discrete AKNS hierarchies. The obtained symmetry
algebras are not centerless Kac-Moody-Virasoro type. The structure
changes will also be explained in Part II.

The present paper is organized as follows.
Sec.2 contains some basic notations and backgrounds on the AL spectral problem.
Sec.3 derives four-potential isospectral and non-isospectral flows and their zero-curvature representations.
In Sec.4 we derive symmetries and their algebraic structures
for both four-potential isospectral and non-isospectral hierarchies.
In Sec.5 we discuss reduction relation between four-potential case and two-potential case.
There are also two Appendix sections.
Sec.A lists out first few equations in four-potential AL hierarchies and their Lax pairs,
and Sec.B is a theorem obtained in Ref.\cite{ZDJ-02-JPA} which we give here for self-containedness.

\section{Basic notations and backgrounds}

Let us first introduce some basic notations and notions which have been used
for discussing symmetries of discrete systems (cf. \cite{MWX-99-JMP,ZDJ-02-JPA,ZDJ-06-PLA}).

Assume that $u_n\doteq u(t, n)=(u^{(1)}, u^{(2)}, u^{(3)},u^{(4)})^T$ is a four-dimensional vector field, where
$u^{(i)}=u^{(i)}(t, n),~1\leq i\leq 4$, are all functions defined
over $\mathbb{R}\times \mathbb{Z}$ and vanish rapidly as
$|n|\rightarrow \infty$.
By $\mathcal{V}_{4}$ we denote a linear space consisting of all vector
fields $f=(f^{(1)}, f^{(2)}, f^{(3)}, f^{(4)})^{T}$, where each $f^{(i)}$
is a function of $u(t, n)$ and its shifts $u(t, n+j),~j\in \mathbb{Z}$,   satisfying $f^{(i)}(u(t, n))|_{u_{n}=0}=0$,
and each $f^{(i)}$
is $C^{\infty}$ differentiable w.r.t. $t$ and $n$, and
$C^{\infty}$-Gauteaux differentiable w.r.t. $u_{n}$.
Here the Gateaux (or Fr\'echet) derivative of $f\in
\mathcal{V}_{4}$ (or $f$ an  operator  on $\mathcal{V}_{4})$ in the
direction $g\in \mathcal{V}_{4}$ is defined as
\begin{equation}
f^{\prime}[g]=\frac{d}{d \epsilon}\Bigr|_{\epsilon=0}f(u+\epsilon g).
\label{def-gat}
\end{equation}
By means of  the Gateaux derivative one can define a Lie product for any $f,
g\in \mathcal{V}_{4}$  as
\begin{equation}
\K  f, g\J  =f^{\prime}[g]-g^{\prime}[f].
\end{equation}

We also define  a Laurent matrix polynomials
space $\mathcal {Q}_{2}(z)$ composed by all $2\times 2$ matrices $Q=Q(z, u(t,
n))=(q_{ij}(z, u(t, n)))_{2\times 2}$, where all the $\{q_{ij}\}$    are
Laurent polynomials of $z$. Two subspaces of $\mathcal
{Q}_{2}(z)$ we will need are
\begin{align*}
&\mathcal {Q}^{+}_{2}(z)=\{Q\in \mathcal {Q}_{2}(z)|\mathrm{~the~ lowest~degree~of}~ z\geq 0\},\\
&\mathcal {Q}^{-}_{2}(z)=\{Q\in \mathcal {Q}_{2}(z)|\mathrm{~the~ highest~degree~ of}~ z\leq 0\}.
\end{align*}

We note that in a similar way we can define spaces $\mathcal{V}_{s}$, $\mathcal{Q}_{m}(z)$ and
$\mathcal {Q}^{\pm}_{m}(z)$ (cf. \cite{ZDJ-02-JPA}).

In general a discrete evolution equation arises from
the compatibility of a pair discrete linear problems\footnote{
Actually, they are semi-discrete.}
\begin{equation}
\Phi_{n+1}=U_{n}(z, u(t, n))\Phi_n,~~~~\Phi_{n,t}=V_{n}(z, u(t, n))\Phi_n,
\end{equation}
where $\Phi_n$ is a wave function, $U_{n}$ is a spectral matrix with
spectral parameter $z$ and potential vector $u(t, n)$ while $V_n$ is a matrix
governing time evolution.
The compatibility condition, also called  discrete zero-curvature equation, reads
\begin{equation}
\label{2}
U_{n, t}=(EV_{n})U_{n}-U_{n}V_{n}.
\end{equation}
Here and in the following $E$ is a shift operator defined as $E^jf(n)=f(n+j)$ for $j\in \mathbb{Z}$.
Suppose that the corresponding nonlinear evolution equation is
\begin{equation}
\label{44.1}
u_{n, t}=K(u_{n}).
\end{equation}
Then by means of the Gateaux derivative  the flow $K(u_{n})$ can be embedded
into the zero-curvature equation \eqref{2} as the following,
\begin{equation}
\label{3}
U_{n}^{\prime}[K(u_{n})]=(EV_{n})U_{n}-U_{n}V_{n}-U_{n, z}z_{t},
\end{equation}
which is usually called  the zero-curvature representation of the flow $K(u_{n})$.

For the nonlinear evolution equation \eqref{44.1}, $\sigma(u_{n})\in \mathcal{V}_{4}$
is its symmetry if $\sigma_{t}=K^{\prime}[\sigma]$, i.e.,
\begin{equation}
\label{44.11}
\frac{\tilde{\partial} \sigma}{\tilde{\partial} t}=\K  K, \sigma\J  ,
\end{equation}
where by $\frac{\tilde{\partial}\sigma}{\tilde{\partial} t}$
we specially denote the derivative of $\sigma$ w.r.t. $t$ explicitly included in  $\sigma$,
(for example, $\frac{\tilde{\partial}\sigma }{\tilde{\partial} t}=u_{n}$ if $\sigma=t u_n+u_{n+1}$).

Next, let us recall some backgrounds on four-potential AL hierarchy.
The four-potential AL spectral problem reads
\cite{AL-75-JMP,AL-76-JMP}
\begin{equation}
\label{4p-AL}
\begin{array}{l}
\psi_{1, n+1}=\lambda\psi_{1, n}+Q_{n}\psi_{2, n}+S_{n}\psi_{2, n+1},\\
\psi_{2,n+1}=\lambda^{-1}\psi_{2, n}+R_{n}\psi_{1, n}+T_{n}\psi_{1, n+1},
\end{array}
\end{equation}
where $\lambda$ is a spectral parameter and $Q_{n}, R_{n}, S_{n}, T_{n}$ are four potential functions of $n$ and $t$.
When $S_n=T_n=0$ \eqref{4p-AL} reduces to the two-potential AL spectral problem, i.e.,
\begin{equation}
\label{2p-AL}
\begin{array}{l}
\psi_{1, n+1}=\lambda\psi_{1, n}+Q_{n}\psi_{2, n},\\
\psi_{2,n+1}=\lambda^{-1}\psi_{2, n}+R_{n}\psi_{1, n},
\end{array}
\end{equation}
which is a discrete version of the AKNS-ZS spectral problem (cf.\cite{Ablowitz-04-book}).

An alternative (matrix) form of \eqref{4p-AL} is\cite{AL-75-JMP}
\begin{equation}\label{5}
\Psi_{n+1}=\frac{1}{\Lambda_{n}}\left(
\begin{array}{cc} z^2+S_{n}R_{n}& Q_{n}+z^{-2}S_{n}\\
z^2T_{n}+R_{n}&z^{-2}+T_{n}Q_{n}
\end{array}\right)\Psi_{n},~~~~\Psi_{n}=\left(
\begin{array}{cc} \psi_{1,n}\\
\psi_{2,n}
\end{array}\right),
\end{equation}
where $\Lambda_{n}=1-S_{n}T_{n}$ and we have substituted $z^2$ for $\lambda$.
This form can be gauge-transformed to \cite{CY-86}
\begin{equation}
\Phi_{n+1}=U_n\Phi_{n},~~~
U_n=\left(
\begin{array}{cc} z^{2}+S_{n}R_{n}& zQ_{n}+z^{-1}S_{n}\\
zT_{n}+z^{-1}R_{n}&z^{-2}+T_{n}Q_{n}
\end{array}\right),~~~\Phi_{n}=\left(
\begin{array}{cc} \phi_{1,n}\\
\phi_{2,n}
\end{array}\right),
\label{4p-new}
\end{equation}
where $\Phi_{n}$ and $\Psi_n$ are related through
\begin{equation}
\label{6}
\Phi_{n}=\rho(z)\Psi_{n}\prod^{+\infty}_{i=n}\Lambda_{i}^{-1},
~~~~\rho(z)=\left(
\begin{array}{cc} z^{\frac{1}{2}}& 0\\
0&z^{-\frac{1}{2}}
\end{array}\right).
\end{equation}
Here on $U_n$ we impose a condition
\begin{equation}
\label{cd}
Q_nR_n+S_nT_n\neq 0
\end{equation}
so that $U_n'$ is an injective homomorphism.

Suppose that the time evolution of $\Phi_n$ is
\begin{equation}
\label{4p-new-time}
\Phi_{n,t}=V_{n}\Phi_{n}, \qquad V_n=\left(
\begin{array}{cc} A_{n}& B_{n}\\
C_{n}& D_{n}
\end{array}\right).
\end{equation}
Then the compatibility condition with \eqref{4p-new} yields
\begin{equation}
\label{zce}
U_{n, t}=(EV_{n})U_{n}-U_{n}V_{n}.
\end{equation}
Usually the above discrete zero-curvature equation contributes
a discrete nonlinear evolution equation hierarchy with four potentials and their recursion operator,
but this is not as easy as in two-potential case (related to \eqref{2p-AL}, cf. \cite{Zeng-95-JPA,Ma-Tamizhimani-JPSJ-1999,ZDJ-02-JPA,ZDJ-06-PLA}).
However, the spectral matrix $U_n$ can be separated into \cite{Geng-4p}
\begin{equation}
\label{U-12}
U_{n}=U_{n}^{(2)}U_{n}^{(1)},~~~
U_{n}^{(1)}=\left(
\begin{array}{cc} z & Q_{n}\\
R_{n}& z^{-1}
\end{array}\right),~~
U_{n}^{(2)}=\left(
\begin{array}{cc} z & S_{n}\\
T_{n}& z^{-1}
\end{array}\right).
\end{equation}
Then \eqref{zce} holds if \cite{Geng-4p}
\begin{equation}
\label{zce-aux}
U_{n, t}^{(1)}=\widehat{V}_{n}U_{n}^{(1)}-U_{n}^{(1)}V_{n},\qquad
U_{n, t}^{(2)}=(EV_{n})U_{n}^{(2)}-U_{n}^{(2)}\widehat{V}_{n},
\end{equation}
where $\widehat{V}_{n}= \left(
\begin{array}{cc} a_{n} & b_{n}\\
c_{n}& d_{n}
\end{array}\right)$.
In fact,
\begin{equation*}
U_{n, t}-(EV_{n})U_{n}+U_{n}V_{n} =(U_{n,
t}^{(2)}-(EV_{n})U_{n}^{(2)}+U_{n}^{(2)}\widehat{V}_{n})U_{n}^{(1)}
+U_{n}^{(2)}(U_{n,
t}^{(1)}-\widehat{V}_{n}U_{n}^{(1)}+U_{n}^{(1)}V_{n}).
\end{equation*}
Thus, one can consider the two auxiliary systems given in \eqref{zce-aux},
where each of $U_n^{(j)}$ contains two potentials.
Recently, starting from \eqref{zce-aux} Geng and Dai \cite{Geng-4p}
derived a four-potential AL hierarchy and their recursion relation and considered Hamiltonian structures and nonlinearization of Lax pair.

Noting that in the AL spectral problem \eqref{4p-AL}
the spectral parameter $\lambda$ appears symmetrically
w.r.t. positive and negative powers,
it is then understood that the recursion operator and its inverse can be derived
symmetrically. So are the negative order and positive order hierarchies.
In \cite{ZDJ-06-PLA}, starting from \eqref{2p-AL} we have expressed  isospectral and non-isospectral
two-potential AL hierarchies in the form of $u_{n,t}=L^m H^{(0)}$,
where $m$ is an arbitrary integer (instead of a nature number) and $L$ is the recursion
operator.
This is also true for four-potential case. In the next section we will derive four-potential AL hierarchies
from the two auxiliary systems given in \eqref{zce-aux}.

\section{AL hierarchies and zero-curvature representations}

Now we derive isospectral and non-isospectral AL hierarchies and their recursion operator.
The procedure is quite like the one given in \cite{ZDJ-06-PLA}.
In addition, we will express the obtained isospectral and non-isospectral flows in terms of
zero-curvature equation, by means of which we will prove that the recursion operator is
hereditary and a strong  symmetry of the isospectral hierarchy.

\subsection{Isospectral hierarchy}

The explicit form of the  auxiliary systems \eqref{zce-aux} is
\begin{subequations}
\label{zce-aux-exp}
\begin{align}
z^{-1}z_{t}&= a_{n}-A_{n}+R_{n}b_{n}z^{-1}-C_{n}Q_{n}z^{-1},\label{zce-aux-a}\\
Q_{n, t}&= b_{n}z^{-1}-B_{n}z+Q_{n}(a_{n}-D_{n}),\label{zce-aux-b}\\
R_{n, t}&= c_{n}z-C_{n}z^{-1}+R_{n}(d_{n}-A_{n}),\label{zce-aux-c}\\
z(z^{-1})_{t}&=
d_{n}-D_{n}+Q_{n}c_{n}z-R_{n}B_{n}z;\label{zce-aux-d}
\end{align}
\begin{align}
z^{-1}z_{t}&=  A_{n+1}-a_{n}+T_{n}B_{n+1}z^{-1}-c_{n}S_{n}z^{-1},\label{zce-aux-e}\\
S_{n, t}&= B_{n+1}z^{-1}-b_{n}z+S_{n}(A_{n+1}-d_{n}),\label{zce-aux-f}\\
T_{n, t}&= C_{n+1}z-c_{n}z^{-1}+T_{n}(D_{n+1}-a_{n}),\label{zce-aux-g}\\
z(z^{-1})_{t}&=
D_{n+1}-d_{n}+S_{n}C_{n+1}z-T_{n}b_{n}z.\label{zce-aux-h}
\end{align}
\end{subequations}
From \eqref{zce-aux-a}, \eqref{zce-aux-d}, \eqref{zce-aux-e} and \eqref{zce-aux-h} one can get
\begin{subequations}\label{2.7}
\begin{align}
A_{n}=&(E-1)^{-1}(S_{n}c_{n}z^{-1}+Q_{n}C_{n}z^{-1}-T_{n}B_{n+1}z^{-1}-R_{n}b_{n}z^{-1})+2nz^{-1}z_{t}+A_{0},\\
a_{n}=&
(E-1)^{-1}(S_{n}c_{n}z^{-1}+Q_{n}C_{n}z^{-1}-T_{n}B_{n+1}z^{-1}-R_{n}b_{n}z^{-1})\notag\\
&+Q_{n}C_{n}z^{-1}-R_{n}b_{n}z^{-1}+(2n+1)z^{-1}z_{t}+A_{0},\\
D_n=&(E-1)^{-1}(R_{n}B_{n}z+T_{n}b_{n}z-S_{n}C_{n+1}z-Q_{n}c_{n}z)+2nz(z^{-1})_{t}+D_{0},\\
d_n=&(E-1)^{-1}(R_{n}B_{n}z+T_{n}b_{n}z-S_{n}C_{n+1}z-Q_{n}c_{n}z)+R_{n}B_{n}z-Q_{n}c_{n}z\notag\\
&+(2n+1)z(z^{-1})_{t}+D_{0}.
\end{align}
\end{subequations}
Here $A_{0}=A_{n}|_{u_{n}=0}-2nz^{-1}z_{t}$ and
$D_{0}=D_{n}|_{u_{n}=0}-2nz(z^{-1})_{t}$ where $u_n=(Q_n,R_n,S_n,T_n)^T$. Thus \eqref{zce-aux-exp}
simplifies to
\begin{equation}
\label{u-t}
u_{n, t}=(zL_{1}+z^{-1}L_{2}) \left(
\begin{array}{cccc}
-B_{n}\\ C_{n}\\ -b_{n}\\ c_{n}
\end{array}\right)
+(A_{0}-D_{0}) \left(
\begin{array}{cccc}
Q_{n}\\ -R_{n}\\ S_{n}\\ -T_{n}
\end{array}\right)
+z^{-1}z_{t} \left(
 \begin{array}{cccc}
(4n+1)Q_{n}\\ -(4n+1)R_{n}\\ (4n+3)S_{n}\\ -(4n+3)T_{n}
\end{array}\right),
\end{equation}
where
\begin{subequations}
\begin{align}\label{L1}
L_{1}&=\left(
\begin{array}{cccc} 1 & 0 & 0 & 0\\ -R_{n}^{2} & 0 & 0 & \gamma_{n}^{2}\\
S_{n}R_{n} & 0 & 1 & S_{n}Q_{n}\\-T_{n}R_{n} & \pi_{n}^{2}E &
-T_{n}^{2} & -T_{n}Q_{n}
\end{array}\right)
+\left(\begin{array}{cccc}
 Q_{n}\\ -R_{n}\\ S_{n}\\ -T_{n}
 \end{array}\right) (E-1)^{-1}(R_{n}, S_{n}E, T_{n}, Q_{n}),\\
\label{L2}
L_{2}&=\left(
\begin{array}{cccc} 0 & Q_{n}^{2} & -\gamma_{n}^{2} & 0\\ 0 & -1 & 0 & 0\\
-\pi_{n}^{2}E & S_{n}Q_{n} & R_{n}S_{n} & S_{n}^{2}\\0 & -T_{n}Q_{n}
& -T_{n}R_{n} & -1
\end{array}\right)
+\left(\begin{array}{cccc}
 Q_{n}\\ -R_{n}\\ S_{n}\\ -T_{n}
 \end{array}\right) (E-1)^{-1}(T_{n}E, Q_{n}, R_{n}, S_{n}),
\end{align}
\end{subequations}
in which $\gamma_{n}=\sqrt{1-Q_{n}R_{n}},~
\pi_{n}=\sqrt{1-S_{n}T_{n}}.$
 One can verify that the inverse
operators of $L_{1}$ and $L_{2}$ are
\begin{subequations}
\begin{align}\label{L1-inv}
L_{1}^{-1}=&\left(
\begin{array}{cccc} 1 & 0 & 0 & 0\\ \frac{T_{n-1}R_{n-1}E^{-1}}{\gamma_{n-1}^{2}}
& \frac{T_{n-1}Q_{n-1}E^{-1}}{\gamma_{n-1}^{2}}
 & \frac{T_{n-1}^{2}E^{-1}}{\pi_{n-1}^{2}} & \frac{E^{-1}}{\pi_{n-1}^{2}}\\
-\frac{S_{n}R_{n}}{\gamma_{n}^{2}} &
-\frac{S_{n}Q_{n}}{\gamma_{n}^{2}} & 1 &0\\
\frac{R_{n}^{2}}{\gamma_{n}^{2}}& \frac{1}{\gamma_{n}^{2}} & 0 & 0
\end{array}\right)\notag\\
&-\left(\begin{array}{cccc}
 Q_{n}\\ -T_{n-1}E^{-1}\\ S_{n}\\ -R_{n}
 \end{array}\right) (E-1)^{-1}\biggl(\frac{R_{n}}{\gamma_{n}^{2}}, \frac{Q_{n}}{\gamma_{n}^{2}},
  \frac{T_{n}}{\pi_{n}^{2}}, \frac{S_{n}}{\pi_{n}^{2}}\biggr),
\end{align}
\begin{align}\label{L2-inv}
L_{2}^{-1}=&\left(
\begin{array}{cccc} -\frac{S_{n-1}R_{n-1}E^{-1}}{\gamma_{n-1}^{2}} & -\frac{S_{n-1}Q_{n-1}E^{-1}}{\gamma_{n-1}^{2}}
 & -\frac{E^{-1}}{\pi_{n-1}^{2}} & -\frac{S_{n-1}^{2}E^{-1}}{\pi_{n-1}^{2}}\\
 0 & -1 & 0 & 0\\
-\frac{1}{\gamma_{n}^{2}} & -\frac{Q_{n}^{2}}{\gamma_{n}^{2}} & 0&
0\\\frac{T_{n}R_{n}}{\gamma_{n}^{2}} &
\frac{T_{n}Q_{n}}{\gamma_{n}^{2}} & 0 & -1
\end{array}\right)\notag \\
&-\left(\begin{array}{cccc}
 S_{n-1}E^{-1}\\ -R_{n}\\ Q_{n}\\ -T_{n}
 \end{array}\right) (E-1)^{-1}\biggl(\frac{R_{n}}{\gamma_{n}^{2}}, \frac{Q_{n}}{\gamma_{n}^{2}},
  \frac{T_{n}}{\pi_{n}^{2}}, \frac{S_{n}}{\pi_{n}^{2}}\biggr).
\end{align}
\end{subequations}

To derive isospectral AL hierarchy, we need to take $z_t \equiv 0$ in  \eqref{u-t} and expand $(B_{n}, C_{n}, b_{n}, c_{n})^{T}$ as
\begin{equation}
\label{BCbc+}
\left(
\begin{array}{cccc}
B_{n}\\ C_{n}\\ b_{n}\\ c_{n}
\end{array}\right)=\sum_{j=0}^{m}\left(
\begin{array}{cccc}
B_{n}^{(j)}\\ C_{n}^{(j)}\\ b_{n}^{(j)}\\ c_{n}^{(j)}
\end{array}\right)z^{2(m-j)+1},\qquad m=0,1,2,\cdots.
\end{equation}
Then by setting $(B_{n}^{(0)}, C_{n}^{(0)},
b_{n}^{(0)},c_{n}^{(0)})^{T}=(0, 0, 0, 0)^{T}$,
$A_{0}=-D_{0}=A_{n}|_{u_{n}=0}=-D_{n}|_{u_{n}=0}=\frac{1}{2}z^{2m}$, and comparing the coefficients of
the same powers of $z$ in \eqref{u-t} we get
\begin{subequations}
\label{ut+}
\begin{align}
&\left(
\begin{array}{cccc}
Q_{n}\\ R_{n}\\ S_{n}\\ T_{n}
\end{array}\right)_{t_{m}}
=(1-\delta_{0,m})L_{2}\left(
\begin{array}{cccc}
-B_{n}^{(m)}\\ C_{n}^{(m)}\\ -b_{n}^{(m)}\\ c_{n}^{(m)}
\end{array}\right)+\delta_{0,m}
\left(
\begin{array}{cccc}
Q_{n}\\ -R_{n}\\ S_{n}\\ -T_{n}
\end{array}\right),\label{ut+a}\\
&\left(
\begin{array}{cccc}
-B_{n}^{(j+1)}\\ C_{n}^{(j+1)}\\ -b_{n}^{(j+1)}\\ c_{n}^{(j+1)}
\end{array}\right)=-L_{1}^{-1}L_{2}\left(
\begin{array}{cccc}
-B_{n}^{(j)}\\ C_{n}^{(j)}\\ -b_{n}^{(j)}\\ c_{n}^{(j)}
\end{array}\right), \qquad j=1,2,\cdots, m-1,\\
&\left(
\begin{array}{cccc}
-B_{n}^{(1)}\\ C_{n}^{(1)}\\ -b_{n}^{(1)}\\ c_{n}^{(1)}
\end{array}\right)=-L_{1}^{-1}\left(
\begin{array}{cccc}
Q_{n}\\ -R_{n}\\ S_{n}\\ -T_{n}
\end{array}\right),\label{ut+c}
\end{align}
\end{subequations}
where the subindex $m$ for $t$ indicates the order of the expansion
\eqref{BCbc+} as well as the order of member in isospectral
hierarchy. This further yields a isospectral hierarchy
\begin{equation}
\label{hie-iso+}
u_{n, t_{m}}=K^{(m)}=L^{m}K^{(0)}, \qquad m=0, 1, 2, \cdots,
\end{equation}
where
\begin{equation}
\label{K0}
 K^{(0)}=(Q_{n}, -R_{n}, S_{n}, -T_{n})^{T},
\end{equation}
 and the
recursion operator $L$ is defined by
\begin{align}
\label{L}
L=-L_{2}L_{1}^{-1}= &\left(
\begin{array}{cccc} -S_{n}R_{n} & -S_{n}Q_{n} & \gamma_{n}^{2} & -Q_{n}^{2}E^{-1} \\
0 & 0 & 0 & E^{-1}\\ \pi_{n}^{2}E & -S_{n}^{2} & -S_{n}R_{n} &
-S_{n}Q_{n}E^{-1}\\
\frac{\pi_{n}^{2}R_{n}^{2}}{\gamma_{n}^{2}} &
\frac{1-Q_{n}R_{n}S_{n}T_{n}}{\gamma_{n}^{2}} & T_{n}R_{n} &
T_{n}Q_{n}E^{-1}
\end{array}\right)\notag\\
&-\left(\begin{array}{cccc}
 Q_{n}\\ -R_{n}\\ S_{n}\\ -T_{n}
 \end{array}\right) (E-1)^{-1}(T_{n}E, S_{n}, R_{n}, Q_{n}E^{-1})\notag\\
&
 -\left(\begin{array}{cccc}
 S_{n}\gamma_{n}^{2}\\ -T_{n-1}\gamma_{n}^{2}\\ Q_{n+1}\pi_{n}^{2}E\\
 -R_{n}\pi_{n}^{2}
 \end{array}\right) (E-1)^{-1}\biggl(\frac{R_{n}}{\gamma_{n}^{2}}, \frac{Q_{n}}{\gamma_{n}^{2}},
  \frac{T_{n}}{\pi_{n}^{2}}, \frac{S_{n}}{\pi_{n}^{2}}\biggr).
\end{align}

If we expand $(B_{n}, C_{n}, b_{n}, c_{n})^{T}$ in another direction, i.e.,
\begin{equation}\label{2.22}
\left(
\begin{array}{cccc}
B_{n}\\ C_{n}\\ b_{n}\\ c_{n}
\end{array}\right)=\sum_{j=m}^{0}\left(
\begin{array}{cccc}
{B}_{n}^{(j)}\\ {C}_{n}^{(j)}\\ {b}_{n}^{(j)}\\
{c}_{n}^{(j)}
\end{array}\right)z^{2(m-j)-1},\qquad m=0,-1,-2,\cdots,
\end{equation}
and take $({B}_{n}^{(0)}, {C}_{n}^{(0)}, {b}_{n}^{(0)},
{c}_{n}^{(0)})^{T}=(0, 0, 0, 0)^{T}$,
$A_{0}=-D_{0}=A_{n}|_{u_{n}=0}=-D_{n}|_{u_{n}=0}=\frac{1}{2}z^{2m}$, we can have relation
\begin{subequations}
\label{ut-}
\begin{align}
&\left(
\begin{array}{cccc}
Q_{n}\\ R_{n}\\ S_{n}\\ T_{n}
\end{array}\right)_{t_{m}}
=(1-\delta_{0,m})L_{1}\left(
\begin{array}{cccc}
-B_{n}^{(m)}\\ C_{n}^{(m)}\\ -b_{n}^{(m)}\\ c_{n}^{(m)}
\end{array}\right)+\delta_{0,m}
\left(
\begin{array}{cccc}
Q_{n}\\ -R_{n}\\ S_{n}\\ -T_{n}
\end{array}\right),\\
&\left(
\begin{array}{cccc}
-B_{n}^{(j)}\\ C_{n}^{(j)}\\ -b_{n}^{(j)}\\ c_{n}^{(j)}
\end{array}\right)=-L_{2}^{-1}L_{1}\left(
\begin{array}{cccc}
-B_{n}^{(j+1)}\\ C_{n}^{(j+1)}\\ -b_{n}^{(j+1)}\\ c_{n+1}^{(j)}
\end{array}\right), \qquad j=-2,-3,\cdots, m,\\
&\left(
\begin{array}{cccc}
-B_{n}^{(-1)}\\ C_{n}^{(-1)}\\ -b_{n}^{(-1)}\\ c_{n}^{(-1)}
\end{array}\right)=-L_{2}^{-1}\left(
\begin{array}{cccc}
Q_{n}\\ -R_{n}\\ S_{n}\\ -T_{n}
\end{array}\right),
\end{align}
\end{subequations}
and further get a negative order isospectral hierarchy
\begin{equation}
\label{hie-iso-} u_{n, t_{m}}=K^{(m)}=L^{m}K^{(0)}, \qquad m=0, -1,
-2, \cdots,
\end{equation}
where $K^{(0)}$ and $L$ are given by \eqref{K0} and \eqref{L} respectively.

Obviously, \eqref{hie-iso+} and \eqref{hie-iso-} can be jointed
together and written as a uniformed isospectral AL hierarchy
\begin{equation}
\label{hie-iso} u_{n, t_m}=K^{(m)}=L^{m}K^{(0)}, \qquad m\in
\mathbb{Z}.
\end{equation}
In Appendix \ref{A-1} we will list out the first few isospectral equations and their related Lax pairs.

\subsection{Non-isospectral hierarchy}


For the non-isospectral case, we suppose the time evolution of
spectral parameter $z$ follows $z_{t_m}=\frac{1}{2}z^{2m+1}$ for any
given $m\in \mathbb{Z}$. We can first expand $(B_{n}, C_{n}, b_{n},
c_{n})^{T}$ as \eqref{BCbc+} and still take $(B_{n}^{(0)},
C_{n}^{(0)}, b_{n}^{(0)}, c_{n}^{(0)})^{T}=(0, 0, 0, 0)^{T}$ but
$A_{0}=-D_{0}=0$. Then we can get a recursion relation which is
similar to \eqref{ut+} but in this non-isospectral case \eqref{ut+a}
and \eqref{ut+c} are replaced by
\begin{align*}
&\left(
\begin{array}{cccc}
Q_{n}\\ R_{n}\\ S_{n}\\ T_{n}
\end{array}\right)_{t_{m}}
=(1-\delta_{0,m})L_{2}\left(
\begin{array}{cccc}
-B_{n}^{(m)}\\ C_{n}^{(m)}\\ -b_{n}^{(m)}\\ c_{n}^{(m)}
\end{array}\right)+\delta_{0,m}
\left(
\begin{array}{cccc}
(2n+\frac{1}{2})Q_{n}\\ -(2n+\frac{1}{2})R_{n}\\ (2n+\frac{3}{2})S_{n}\\
-(2n+\frac{3}{2})T_{n}
\end{array}\right),\\
&\left(
\begin{array}{cccc}
-B_{n}^{(1)}\\ C_{n}^{(1)}\\ -b_{n}^{(1)}\\ c_{n}^{(1)}
\end{array}\right)=-L_{1}^{-1}\left(
\begin{array}{cccc}
(2n+\frac{1}{2})Q_{n}\\ -(2n+\frac{1}{2})R_{n}\\
(2n+\frac{3}{2})S_{n}\\ -(2n+\frac{3}{2})T_{n}
\end{array}\right).
\end{align*}
It then follows that a positive order non-isospectral hierarchy is
\begin{equation}
\label{2.34}
u_{n, t_m}=\sigma^{(m)}=L^{m}\sigma^{(0)}, \qquad m=0, 1, 2, \cdots,
\end{equation}
where
\begin{equation}
\label{sigma0}
 \sigma^{(0)}=\left(
\begin{array}{cccc}
(2n+\frac{1}{2})Q_{n}\\ -(2n+\frac{1}{2})R_{n}\\
(2n+\frac{3}{2})S_{n}\\ -(2n+\frac{3}{2})T_{n}
\end{array}\right),
\end{equation}
and $L$ is the recursion operator given by \eqref{L}.
After a discussion for a negative order expansion of $(B_{n}, C_{n}, b_{n}, c_{n})^{T}$
and deriving a negative order non-isospectral hierarchy, one can finally reach to
a uniformed non-isospectral AL hierarchy:
\begin{equation}
\label{hie-non}
u_{n, t_m}=\sigma^{(m)}=L^{m}\sigma^{(0)}, \qquad m\in \mathbb{Z}.
\end{equation}

In Appendix \ref{A-1} we will also list out the first few
non-isospectral equations and their related Lax pairs.

\subsection{Zero-curvature representations}

We have derived isospectral hierarchy \eqref{hie-iso} and non-isospectral hierarchy \eqref{hie-non}.
Suppose that their Lax pairs are respectively
\begin{equation}
\Phi_{n+1}=U_n\Phi_{n},~~~\Phi_{n,t_m}=G_n^{(m)}\Phi_n,
\label{Laxp-iso}
\end{equation}
and
\begin{equation}
\Phi_{n+1}=U_n\Phi_{n},~~~\Phi_{n,t_m}=W_n^{(m)}\Phi_n,
\label{Laxp-non}
\end{equation}
where $m\in\mathbb{Z}$ and $U_n$ is defined in \eqref{4p-new}.
Then in isospectral case the zero-curvature equation which is related to \eqref{Laxp-iso} is
\begin{equation}
U_{n,t_m}=(EG^{(m)}_{n})U_{n}-U_{n}G^{(m)}_{n}, \label{zce-iso}
\end{equation}
and in non-isospectral case
\begin{equation}
U_{n,t_m}=(EW^{(m)}_{n})U_{n}-U_{n}W^{(m)}_{n}. \label{zce-non}
\end{equation}

Noticing the definition \eqref{def-gat} for a Gateaux derivative, $U_{n,t_m}$ can be rewritten as
\begin{equation}
U_{n,t_m}=U_n'[u_{n,t_m}]+U_{n,z}\cdot z_{t_m},
\end{equation}
by means of which we have
\begin{proposition}\label{p1}
The isospectral flows $\{K^{(m)}\}$ and non-isospectral flows
$\{\sigma^{(m)}\}$ admit  zero curvature representations
\begin{align}
& U_{n}^{\prime}[K^{(m)}]=(EG^{(m)}_{n})U_{n}-U_{n}G^{(m)}_{n},\label{zcr-iso}\\
&
U_{n}^{\prime}[\sigma^{(m)}]=(EW^{(m)}_{n})U_{n}-U_{n}W^{(m)}_{n}-\frac{1}{2}z^{2m+1} U_{n,z},\label{zcr-non}
\end{align}
where $m\in \mathbb{Z}$,
$K^{(m)}$ and $\sigma^{(m)} \in \mathcal{V}_{4}$, $G^{(m)}_{n}$ and
$ W^{(m)}_{n} \in \mathcal{Q}_{2}(z)$ and satisfy
\begin{equation}\label{2.44}
G^{(m)}_{n}|_{u_{n}=0}=\frac{z^{2m}}{2}\left(
\begin{array}{cc} 1 & 0\\ 0& -1
\end{array}\right),~~
W^{(m)}_{n}|_{u_{n}=0}=z^{2m}\left(
\begin{array}{cc} n & 0\\ 0& -n
\end{array}\right).
\end{equation}
\end{proposition}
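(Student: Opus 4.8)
The plan is to establish the two zero-curvature representations \eqref{zcr-iso} and \eqref{zcr-non} directly from the construction of the hierarchies given in Secs.~3.1--3.2, exploiting the fact that the flows were \emph{defined} through the auxiliary zero-curvature equations \eqref{zce-aux}. First I would observe that, by the construction, for the $m$-th isospectral flow there is a well-defined matrix $V_n = V_n^{(m)}$ of the form \eqref{4p-new-time} whose entries $A_n,B_n,C_n,D_n$ are the polynomials in $z$ obtained from the expansions \eqref{BCbc+} or \eqref{2.22} together with the formulas \eqref{2.7}; set $G_n^{(m)} := V_n^{(m)}$. With $z_{t_m}\equiv 0$ in the isospectral case, \eqref{zce} reads $U_{n,t_m}=(EV_n)U_n-U_nV_n$, and since $z$ is constant, $U_{n,t_m}=U_n'[u_{n,t_m}]=U_n'[K^{(m)}]$ by the chain rule \eqref{def-gat}. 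This gives \eqref{zcr-iso} immediately, \emph{provided} one checks that the polynomials produced by the recursion really do satisfy the original equation \eqref{zce-aux} (equivalently, the system \eqref{zce-aux-exp}) — but that is exactly how \eqref{u-t} and the recursion relations \eqref{ut+}, \eqref{ut-} were derived, so it is a matter of reading the derivation backwards.

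For the non-isospectral case the argument is the same in spirit but now $z_{t_m}=\tfrac12 z^{2m+1}\neq 0$. Choosing $W_n^{(m)}$ to be the matrix $V_n$ assembled from the non-isospectral expansion (with $A_0=-D_0=0$ and the $z^{-1}z_t$ term reinstated in \eqref{u-t}), the compatibility \eqref{zce} again holds, but the decomposition $U_{n,t_m}=U_n'[u_{n,t_m}]+U_{n,z}z_{t_m}$ now has a nontrivial second term. Substituting $u_{n,t_m}=\sigma^{(m)}$ and $z_{t_m}=\tfrac12 z^{2m+1}$ and moving the $U_{n,z}$ term to the right-hand side produces exactly \eqref{zcr-non}. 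The membership statements $K^{(m)},\sigma^{(m)}\in\mathcal V_4$ follow from the structure of $L$, $L^{-1}$, $K^{(0)}$ and $\sigma^{(0)}$ (each entry is built from the potentials, their shifts, and the formally skew operator $(E-1)^{-1}$ applied to expressions that vanish at $u_n=0$), and $G_n^{(m)},W_n^{(m)}\in\mathcal Q_2(z)$ because each entry is, by the expansions \eqref{BCbc+}/\eqref{2.22}, a Laurent polynomial in $z$.

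The normalization conditions \eqref{2.44} I would verify by evaluating the assembled $V_n$ at $u_n=0$. When $u_n=0$ the off-diagonal entries $B_n,C_n,b_n,c_n$ vanish (they are generated from $K^{(0)}$ or $\sigma^{(0)}$, both proportional to the potentials, through operators that preserve this property), so only the diagonal part survives. In the isospectral case this diagonal part is governed by $A_0=-D_0=A_n|_{u_n=0}=\tfrac12 z^{2m}$, giving $G_n^{(m)}|_{u_n=0}=\tfrac{z^{2m}}{2}\,\mathrm{diag}(1,-1)$; in the non-isospectral case $A_0=-D_0=0$ but the $2nz^{-1}z_{t_m}=nz^{2m}$ contributions in \eqref{2.7} remain, yielding $W_n^{(m)}|_{u_n=0}=z^{2m}\,\mathrm{diag}(n,-n)$. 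I should also double-check the diagonal entries do not pick up extra $z$-dependent pieces from the $(E-1)^{-1}(\cdots)$ terms in \eqref{2.7}, but those arguments all contain a potential factor and vanish at $u_n=0$.

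The main obstacle is bookkeeping rather than conceptual: one must be careful that the matrix $V_n$ reconstructed from the truncated expansions \eqref{BCbc+} or \eqref{2.22} genuinely closes the zero-curvature equation with \emph{no leftover terms at the top or bottom powers of $z$}. Concretely, the choices $(B_n^{(0)},C_n^{(0)},b_n^{(0)},c_n^{(0)})=(0,0,0,0)$ and the boundary relations \eqref{ut+c}, and on the other side the terminating relation at order $m$, are exactly what kill the extreme-power terms; I would state this as the content of the proof and point out that it is precisely the computation already carried out in passing from \eqref{zce-aux-exp} to \eqref{u-t} and the recursions \eqref{ut+}--\eqref{ut-}. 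A secondary subtlety worth a sentence is the injectivity of $U_n'$ under condition \eqref{cd}: it guarantees that the flow $K^{(m)}$ (resp.\ $\sigma^{(m)}$) appearing on the left of \eqref{zcr-iso}/\eqref{zcr-non} is uniquely pinned down by the right-hand side, so the representation is unambiguous and can be used later for the symmetry computations.
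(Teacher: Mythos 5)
Your proposal is correct and follows essentially the same route as the paper: the paper's entire ``proof'' is the one-line observation that $U_{n,t_m}=U_n'[u_{n,t_m}]+U_{n,z}\,z_{t_m}$, substituted into the zero-curvature equations \eqref{zce-iso} and \eqref{zce-non} with $z_{t_m}=0$ (isospectral) or $z_{t_m}=\tfrac12 z^{2m+1}$ (non-isospectral). Your additional checks --- that the truncated expansions genuinely close the zero-curvature equation and that the normalizations \eqref{2.44} follow from $A_0=-D_0=\tfrac12 z^{2m}$ resp.\ $A_0=-D_0=0$ with the surviving $2nz^{-1}z_{t_m}=nz^{2m}$ terms in \eqref{2.7} --- are correct and merely make explicit what the paper leaves implicit.
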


Besides, noting that $U_n'$ is an injective homomorphism when  $Q_nR_n+S_nT_n\neq 0$, we have (cf.\cite{ZDJ-02-JPA,ZDJ-06-PLA})
\begin{lemma}
The matrix equation
\begin{equation}
U_{n}^{\prime}[X_n]=(EV_{n})U_{n}-U_{n}V_{n}, ~~X_n\in
\mathcal{V}_{4},~ V_{n}\in \mathcal{Q}_{2}(z) ~\mathrm{and}
~V_{n}|_{u_{n}=0}=0
\end{equation}
has only zero solutions $X_n=0$ and $V_{n}=0$.
\end{lemma}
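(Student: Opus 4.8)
The plan is to exploit the injectivity of $U_n'$ that is guaranteed by the condition $Q_nR_n+S_nT_n\neq 0$ (imposed on $U_n$ in \eqref{cd}). Suppose $(X_n,V_n)$ solves the matrix equation with $V_n|_{u_n=0}=0$. The first step is to evaluate the whole equation at $u_n=0$: since $X_n\in\mathcal{V}_4$ satisfies $X_n|_{u_n=0}=0$ by definition of $\mathcal{V}_4$, the Gateaux derivative $U_n'[X_n]$ is a matrix whose entries are built from the entries of $X_n$ linearly (with coefficients involving $z$ and the potentials), so it vanishes at $u_n=0$ together with the right-hand side $(EV_n)U_n-U_nV_n$ because $V_n|_{u_n=0}=0$. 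That gives no information directly; instead, the useful move is to differentiate the identity itself.

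The key idea is a bootstrapping/induction on the order of vanishing of $V_n$ at $u_n=0$. First I would take the Gateaux derivative of the whole equation in an arbitrary direction $g\in\mathcal{V}_4$ and then set $u_n=0$. On the left one gets $U_n'{}'[X_n,g]|_0 + U_n'[X_n'[g]]|_0$; since $X_n|_0=0$ and $U_n$ is at most quadratic-type but more importantly $U_n|_0$ is the constant diagonal matrix $\mathrm{diag}(z^2,z^{-2})$, one finds that the leading contribution is controlled by $U_n'|_0[X_n'[g]|_0]$. On the right, differentiating $(EV_n)U_n-U_nV_n$ and setting $u_n=0$ uses $V_n|_0=0$, so only $V_n$-derivative terms survive and they are again of the form $(E V_n'[g]|_0)U_n|_0 - U_n|_0 V_n'[g]|_0$. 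Thus the pair $(X_n'[g]|_0,\,V_n'[g]|_0)$ satisfies an equation of exactly the same shape as the original, but now with the coefficient matrix frozen at its value at $u_n=0$. Since $U_n'$ is an injective homomorphism, and with $U_n$ evaluated at $u_n=0$ the map $V\mapsto (EV)U_n|_0 - U_n|_0 V$ becomes an explicit linear algebra operation whose kernel can be computed directly, one concludes $V_n'[g]|_0=0$ and $X_n'[g]|_0=0$ for every direction $g$. Iterating this — differentiating $k$ times in directions $g_1,\dots,g_k$ and evaluating at $u_n=0$ — shows that all Gateaux derivatives of $V_n$ and $X_n$ of every order vanish at $u_n=0$; since the entries are (formal) functions analytic in $u_n$ and its shifts and vanishing at $u_n=0$, this forces $V_n=0$ and then, by injectivity of $U_n'$, $X_n=0$.

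Concretely, the mechanism of "injectivity of $U_n'$" should be used as a black box exactly as cited from \cite{ZDJ-02-JPA,ZDJ-06-PLA}: once we know the right-hand side $(EV_n)U_n-U_nV_n$ equals $U_n'[X_n]$ and we have shown $V_n=0$, injectivity gives $X_n=0$ immediately. So the crux is the descent argument that kills $V_n$. An alternative, cleaner organization: plug the known form $U_n$ into $(EV_n)U_n - U_nV_n = U_n'[X_n]$, compare the highest and lowest powers of $z$ appearing. Because $U_n$ contains the diagonal terms $z^2$ and $z^{-2}$ while $U_n'[X_n]$ has $z$-degree bounded by that of $U_n$ (only $z^{\pm1}$ from the off-diagonal potential entries multiply the direction vector), the top-degree and bottom-degree coefficients in $z$ of the equation force the off-diagonal entries of $V_n$ at extreme degrees to vanish, then recursively the remaining entries; the explicit triangular structure of the resulting linear recursions in the $z$-coefficients of $V_n$ does the rest. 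Either route reduces the problem to finite linear algebra in the $z$-graded pieces plus the already-granted injectivity statement.

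The main obstacle I anticipate is handling the $(E-1)^{-1}$-type nonlocality correctly: the entries of $X_n$ coming from the hierarchy (the $-B_n,C_n,\dots$) and of $V_n$ are nonlocal in $n$, so "vanishing at $u_n=0$" and "comparing coefficients" must be justified in the right function space $\mathcal{V}_4$, $\mathcal{Q}_2(z)$, using the decay as $|n|\to\infty$ to make $(E-1)^{-1}$ well-defined and to rule out constant-in-$n$ ambiguities. The bookkeeping of which $z$-powers can occur on each side (so that the degree comparison is valid) is the part that needs care, but it is routine given the explicit form of $U_n$; the genuinely substantive input — that $U_n'$ is injective on $\mathcal{V}_4$ — is already available from the cited references and from condition \eqref{cd}.
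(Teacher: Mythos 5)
The paper itself does not prove this lemma; it simply records it as a consequence of the injectivity of $U_n'$ under the condition $Q_nR_n+S_nT_n\neq 0$ and points to \cite{ZDJ-02-JPA,ZDJ-06-PLA}, where the argument is the $z$-degree comparison you describe as your ``alternative, cleaner organization.'' That second route is the right one: expand $V_n=\sum_k V_n^{[k]}z^k$, note that $U_n'[X_n]$ only contains the powers $z^{-1},z^{0},z^{1}$ while the diagonal $z^{\pm2}$ entries of $U_n$ push the extreme coefficients of $(EV_n)U_n-U_nV_n$ out to degrees $k_{\max}+2$ and $k_{\min}-2$, and run the resulting triangular recursion inward; the off-diagonal coefficients are forced to vanish algebraically, the diagonal ones satisfy difference equations of the form $(E-1)v=\cdots$ and are therefore fixed only up to $n$-constants, and it is precisely the normalization $V_n|_{u_n=0}=0$ (together with the decay of $u_n$, which pins down $(E-1)^{-1}$) that kills those constants; finally $U_n'[X_n]=0$ and injectivity give $X_n=0$. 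You correctly flag all of these ingredients, so this half of your proposal is an acceptable sketch, with the caveat that the clause ``the triangular structure does the rest'' is where essentially all of the work of the cited proofs actually lives.

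Your primary route (iterated Gateaux differentiation at $u_n=0$) has two genuine gaps. First, after freezing at $u_n=0$ the map $V\mapsto (EV)U_n|_0-U_n|_0V$ with $U_n|_0=\mathrm{diag}(z^2,z^{-2})$ has a nontrivial kernel: its diagonal part is just $(E-1)$ acting entrywise, so the frozen equation determines $V_n'[g]|_0$ only modulo diagonal matrices constant in $n$, and nothing at this stage forces those constants to vanish --- you would again need the decay/normalization input, which you invoke only for the other route. Second, even granting that every Gateaux derivative of $V_n$ and $X_n$ vanishes at $u_n=0$, the conclusion $V_n\equiv 0$ requires the entries to depend analytically on $u_n$; the spaces $\mathcal{V}_4$ and $\mathcal{Q}_2(z)$ as defined in Section 2 carry only $C^\infty$ Gateaux differentiability, so this inference is not available. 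The descent argument therefore cannot stand on its own and should be dropped in favour of the degree-comparison argument, carried out explicitly.
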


This lemma and the zero-curvature representations \eqref{zcr-iso} and \eqref{zcr-non}
will play important roles in constructing symmetries and determining their algebraic structures.

\subsection{Recursion operator and hereditary and strong symmetry}

The flows $\{K^{(m)}\}, \{\sigma^{(m)}\}$ and their recursion relation can also be derived from the following way.

Firstly, we start from the matrix equations
\begin{align}
& U_{n}^{\prime}[X_n]=(EG_{n})U_{n}-U_{n}G_{n},\label{zcr-iso-0}\\
& U_{n}^{\prime}[Z_n]=(EW_{n})U_{n}-U_{n}W_{n}-U_{n,z}\cdot
\frac{z}{2},\label{zcr-non-0}
\end{align}
where the unknowns are $X_n, Z_n\in \mathcal{V}_{4}$ and
$G_{n},W_n\in\mathcal{Q}_{2}(z)$. Obviously, both of the two
equations admit non-zero solution pairs $\{X_n, G_n\}$ and $\{Z_n,
W_n\}$ (see \eqref{zcr-iso} and \eqref{zcr-non} for $m=0$).
Secondly, for equation
\begin{equation}
\label{zce-rec} U_{n}'[X_n-z^{\alpha}Y_n]=(EV_{n})U_{n}-U_{n}V_{n}
\end{equation}
where $X_n, Y_n\in \mathcal{V}_{4}$ and
$V_{n}\in\mathcal{Q}_{2}(z)$, one can find that when $\alpha=2$, for
any given $Y_n\neq 0\in \mathcal{V}_{4}$, there exist unique
solutions $X_n\in \mathcal{V}_{4}$, $V_{n}\in\mathcal{Q}_{2}(z)$
where $V_{n}|_{u_n=0}=0$. Thus $X_n$ and $Y_n$ are connected by a
map:\footnote{If $\alpha=-2$ then the map is $X=L^{-1}Y$.}
\begin{equation}
\label{map:L} L:~~ X_n=LY_n.
\end{equation}
The map, $L$, is nothing but the recursion operator \eqref{L}. Then,
thirdly, let $Y_n=K^{(0)}$ or $\sigma^{(0)}$ in \eqref{zce-rec}, one
can get $K^{(1)}$ or $\sigma^{(1)}$ by taking $\alpha=2$ and
$K^{(-1)}$ or $\sigma^{(-1)}$ by taking $\alpha=-2$. Repeating the
procedure one can generate all the isospectral flows $\{K^{(m)}\}$
and non-isospectral flows $\{\sigma^{(m)}\}$.

Besides, with the recursion operator $L$ in hand, \eqref{zce-rec} becomes
\begin{equation}
U_{n}'[X-z^{2}LX]=(EV_{n})U_{n}-U_{n}V_{n}.
\end{equation}
Then, in the light of Theorem 1 in Ref.\cite{ZDJ-02-JPA} (also see Appendix \ref{A-2}),
we immediately  have
\begin{proposition}
The recursion operator $L$ is hereditary and a strong symmetry\footnote{
For the definitions of a hereditary operator and a strong symmetry operator,
one can refer to \cite{Fuchssteiner-1981}.} for the  isospectral AL hierarchy (\ref{hie-iso}).
\end{proposition}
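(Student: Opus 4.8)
The plan is to deduce Proposition 3.2 directly from the abstract criterion recalled in Appendix \ref{A-2} (Theorem 1 of Ref.\cite{ZDJ-02-JPA}), so the real work is to verify that the hypotheses of that theorem are met by the data $(U_n, L, K^{(0)})$ assembled in Sections 2 and 3. Concretely, the abstract theorem says: if a spectral matrix $U_n$ with $U_n'$ injective admits, for some operator $L$ on $\mathcal{V}_4$, a relation of the form
\begin{equation*}
U_n'[X - z^2 L X] = (EV_n)U_n - U_n V_n,\qquad V_n\in\mathcal{Q}_2(z),\ V_n|_{u_n=0}=0,
\end{equation*}
holding for all $X\in\mathcal{V}_4$ (with $V_n = V_n(X)$ depending linearly on $X$), then $L$ is hereditary and a strong symmetry for the hierarchy $u_{n,t_m}=L^m K^{(0)}$. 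So the first step is simply to exhibit exactly this relation, which is the displayed equation immediately preceding the proposition, obtained by substituting the map \eqref{map:L} into \eqref{zce-rec} with $\alpha=2$.

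Second, I would check the two standing hypotheses the abstract theorem needs. The injectivity of $U_n'$ is already guaranteed by condition \eqref{cd}, $Q_nR_n+S_nT_n\neq 0$, as noted right after \eqref{4p-new}; this is what forces the solution pair in \eqref{zce-rec} to be unique and hence makes $L$ a well-defined linear operator $\mathcal{V}_4\to\mathcal{V}_4$ in the first place. The normalization $V_n|_{u_n=0}=0$ is built into the construction of $L$ via \eqref{zce-rec}. One also needs that $L$ indeed maps $\mathcal{V}_4$ into $\mathcal{V}_4$ and that the seed $K^{(0)}=(Q_n,-R_n,S_n,-T_n)^T$ together with all $L^m K^{(0)}$ lies in $\mathcal{V}_4$ — this follows because each application of $L$ (equivalently $-L_2L_1^{-1}$) preserves the defining decay/locality properties, which is implicit in the fact that \eqref{hie-iso} was derived as an honest hierarchy in $\mathcal{V}_4$.

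Third, I would invoke the theorem to conclude hereditariness and the strong-symmetry property, and then spell out what these mean here for completeness: hereditary means $L'[Lf]g - L'[Lg]f = L(L'[f]g - L'[g]f)$ for all $f,g\in\mathcal{V}_4$, and strong symmetry means that whenever $\sigma$ is a symmetry of $u_{n,t_m}=K^{(m)}$ so is $L\sigma$, equivalently $L'[K^{(m)}] = (K^{(m)})'L - L(K^{(m)})'$ on the hierarchy. Because \eqref{hie-iso} is generated from $K^{(0)}$ by $L$, it suffices that the strong-symmetry identity hold for $m=0$, i.e.\ $L'[K^{(0)}]=(K^{(0)})'L-L(K^{(0)})'$, which again is part of the output of the cited theorem once the $z^2L$-relation is in place.

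I expect the main obstacle to be purely bookkeeping rather than conceptual: one must make sure that the operator $L$ constructed implicitly through the uniqueness statement for \eqref{zce-rec} is literally the same operator written explicitly in \eqref{L}, and that the hypothesis of the cited theorem — which in \cite{ZDJ-02-JPA} was stated for a two-potential AL problem — transfers verbatim to the four-potential setting. The latter is legitimate because the abstract theorem only uses (i) $U_n'$ injective and (ii) the existence of the $z^2L$-type zero-curvature relation for all $X\in\mathcal{V}_4$; neither feature is specific to the number of potentials, and both have been established above for the four-potential $U_n$ of \eqref{4p-new}. Hence the proof reduces to citing Appendix \ref{A-2} after pointing at the displayed $z^2L$-relation, and no fresh computation with the cumbersome matrices $L_1,L_2$ is required.
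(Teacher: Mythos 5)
Your proposal follows essentially the same route as the paper: the paper's entire proof consists of rewriting \eqref{zce-rec} as $U_{n}'[X-z^{2}LX]=(EV_{n})U_{n}-U_{n}V_{n}$ and then invoking Theorem 1 of Ref.~\cite{ZDJ-02-JPA} (reproduced in Appendix B), exactly as you do. Your version is if anything more careful than the paper's, since you explicitly verify the injectivity of $U_n'$ under condition \eqref{cd}, the normalization $V_n|_{u_n=0}=0$, and the transferability of the cited two-potential theorem to the four-potential setting, all of which the paper leaves implicit.
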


\section{Symmetries and Lie algebras}

In this section, we construct two types of symmetries for the
isospectral and non-isospectral AL hierarchies, respectively.
To do that, let us first look at the algebraic structure of flows $\{K^{(m)}\}$ and
$\{\sigma^{(m)}\}$.
We note that the proofs for this section are quite similar to those in \cite{ZDJ-06-PLA} and here we skip them.

\subsection{Algebra of flows}

Making use of the identity\cite{Olver-book}
\begin{equation}
U_n'[\K  f, g \J   ]=(U_n'[f])'[g]-(U_n'[g])'[f],~~~ \forall f,g \in \mathcal{V}_{4},
\label{identity-1}
\end{equation}
from the zero-curvature representations \eqref{zcr-iso} and \eqref{zcr-non}
we can derive the following relations:
\begin{lemma}\label{lem-1}
The isospectral flows $\{K^{(m)}\}$ and non-isospectral flows $\{\sigma^{(m)}\}$ satisfy
\begin{subequations}
\label{alg-flow}
\begin{align}
&U_{n}^{\prime}[\K  K^{(m)},
K^{(s)}\J  ]=(E<G_{n}^{(m)}, G_{n}^{(s)}>)U_{n}-U_{n}<G_{n}^{(m)}, G_{n}^{(s)}>,\label{45a}\\
&U_{n}^{\prime}[\K  K^{(m)},
\sigma^{(s)}\J  ]=(E<G_{n}^{(m)}, W_{n}^{(s)}>)U_{n}-U_{n}<G_{n}^{(m)}, W_{n}^{(s)}>,\label{45b}\\
&U_{n}^{\prime}[\K  \sigma^{(m)},
\sigma^{(s)}\J  ]=(E<W_{n}^{(m)},
W_{n}^{(s)}>)U_{n}-U_{n}<W_{n}^{(m)},
W_{n}^{(s)}>-\frac{1}{2}(m-s)U_{n, z}z^{2(m+s)+1},\label{45c}
\end{align}
\end{subequations}
where
\begin{subequations}\label{46}
\begin{align}
& <G_{n}^{(m)},
G_{n}^{(s)}>=G_{n}^{(m)\prime}[K^{(s)}]-G_{n}^{(s)\prime}[K^{(m)}]+[G_{n}^{(m)},
G_{n}^{(s)}],\label{46a}\\
& <G_{n}^{(m)},
W_{n}^{(s)}>=G_{n}^{(m)\prime}[\sigma^{(s)}]-W_{n}^{(s)\prime}[K^{(m)}]+[G_{n}^{(m)},
W_{n}^{(s)}]+\frac{1}{2}G_{n, z}^{(m)}z^{2s+1},\label{46b}\\
&<W_{n}^{(m)},
W_{n}^{(s)}>=W_{n}^{(m)\prime}[\sigma^{(s)}]-W_{n}^{(s)\prime}[\sigma^{(m)}]+[W_{n}^{(m)},
W_{n}^{(s)}]+\frac{1}{2}W_{n, z}^{(m)}z^{2s+1}-\frac{1}{2}W_{n,
z}^{(s)}z^{2m+1}.\label{46c}
\end{align}
\end{subequations}
\end{lemma}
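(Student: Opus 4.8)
The plan is to prove Lemma~\ref{lem-1} by differentiating the zero-curvature representations \eqref{zcr-iso} and \eqref{zcr-non} along flows of the hierarchy and combining them using the operator identity \eqref{identity-1}. I would treat the three relations \eqref{45a}--\eqref{45c} in increasing order of bookkeeping complexity, since they all follow the same pattern but carry progressively more $z$-dependent terms arising from the non-isospectral piece $-\tfrac12 z^{2m+1}U_{n,z}$.

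First, for \eqref{45a}: apply $(\,\cdot\,)'[K^{(s)}]$ to \eqref{zcr-iso} with parameter $m$, then apply $(\,\cdot\,)'[K^{(m)}]$ to \eqref{zcr-iso} with parameter $s$, and subtract. On the left side, the identity \eqref{identity-1} collapses the difference $(U_n'[K^{(m)}])'[K^{(s)}]-(U_n'[K^{(s)}])'[K^{(m)}]$ into $U_n'[\K K^{(m)},K^{(s)}\J]$. On the right side, differentiating $(EG_n^{(m)})U_n-U_nG_n^{(m)}$ in the direction $K^{(s)}=u_{n,t_s}$ produces terms $(EG_n^{(m)\prime}[K^{(s)}])U_n$, $(EG_n^{(m)})U_n'[K^{(s)}]$ and their counterparts; after substituting $U_n'[K^{(s)}]$ back using \eqref{zcr-iso} (for $s$) and collecting, one recognizes the commutator $[G_n^{(m)},G_n^{(s)}]$ and the cross-derivative terms, which is precisely the combination $<G_n^{(m)},G_n^{(s)}>$ defined in \eqref{46a}. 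This is essentially a telescoping/Jacobi-type rearrangement; the only care needed is to keep the $E$-conjugation straight, i.e.\ that $(E\alpha)(E\beta)=E(\alpha\beta)$ and that $E$ commutes with $U_n'$ appropriately.

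For \eqref{45b} and \eqref{45c} the same scheme applies, but now at least one of the two representations carries the extra term $-\tfrac12 z^{2m+1}U_{n,z}$. Differentiating that term in a flow direction contributes pieces like $-\tfrac12 z^{2m+1}(U_{n,z})'[\,\cdot\,]$, and one must use the commutativity of $\partial_z$ with the Gateaux derivative $(\,\cdot\,)'$ to rewrite $(U_{n,z})'[f]=(U_n'[f])_z$, then feed in the appropriate representation and differentiate in $z$. This is where the half-integer-power coefficients $\tfrac12 G_{n,z}^{(m)}z^{2s+1}$ in \eqref{46b} and the asymmetric pair $\tfrac12 W_{n,z}^{(m)}z^{2s+1}-\tfrac12 W_{n,z}^{(s)}z^{2m+1}$ in \eqref{46c} are generated, together with the leftover anomalous term $-\tfrac12(m-s)U_{n,z}z^{2(m+s)+1}$ on the right of \eqref{45c} that does not fold into a bracket. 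Tracking these $z$-derivative terms and verifying that everything not absorbed into $<\cdot,\cdot>$ matches exactly the stated anomaly is the main obstacle; it is purely computational but error-prone.

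Since the paper explicitly states that the proofs in this section parallel those in \cite{ZDJ-06-PLA}, I would present the argument for \eqref{45a} in enough detail to exhibit the mechanism, then indicate that \eqref{45b} and \eqref{45c} follow by the identical manipulation with the bookkeeping of the $U_{n,z}$-terms as above, referring the reader to \cite{ZDJ-06-PLA} for the fully expanded computation. The definitions \eqref{46a}--\eqref{46c} are, in effect, \emph{chosen} to be whatever is left over after the reduction, so the content of the lemma is precisely that this leftover is again of zero-curvature form (plus the explicit anomaly in the non-isospectral--non-isospectral case), which is exactly what the substitution produces.
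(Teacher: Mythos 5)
Your proposal is correct and follows essentially the same route the paper indicates: differentiate the zero-curvature representations \eqref{zcr-iso} and \eqref{zcr-non} along the flows, collapse the left-hand sides with the identity \eqref{identity-1}, and collect the right-hand sides into the brackets \eqref{46} plus the $U_{n,z}$ anomaly. The paper itself omits the computation, deferring to the analogous argument in \cite{ZDJ-06-PLA}, so your more detailed outline is consistent with (and a reasonable expansion of) the intended proof.
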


This lemma can be proved via a similar procedure as in \cite{ZDJ-06-PLA}
and here we skip the proof.
Based on the lemma we then come up with a algebra for the flows  $\{K^{(m)}\}$ and   $\{\sigma^{(m)}\}$ (cf.\cite{ZDJ-06-PLA}).
\begin{lemma}
\label{lem-2}
The isospectral and non-isospectral flows, $\{K^{(m)}\}$ and
$\{\sigma^{(m)}\}$, compose an
infinite-dimensional Lie algebra $\mathcal{F}$ through the Lie
product $\K \cdot, \cdot \J  $ and possess the following  relations
\begin{subequations}
\label{47}
\begin{align}
\K  K^{(m)}, K^{(s)}\J   & =0, \\
\K  K^{(m)},\sigma^{(s)}\J   &=mK^{(m+s)}, \\
\K  \sigma^{(m)},\sigma^{(s)}\J   &=(m-s)\sigma^{(m+s)}.
\end{align}
\end{subequations}
\end{lemma}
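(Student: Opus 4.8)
The plan is to deduce the algebra \eqref{47} of the flows from the zero-curvature relations \eqref{alg-flow} in Lemma \ref{lem-1} together with the uniqueness Lemma (the one asserting that $U_n'[X_n]=(EV_n)U_n-U_nV_n$ with $V_n|_{u_n=0}=0$ forces $X_n=0$, $V_n=0$). The strategy is: for each of the three brackets, show that $\K K^{(m)},K^{(s)}\J$, resp. $\K K^{(m)},\sigma^{(s)}\J-mK^{(m+s)}$, resp. $\K\sigma^{(m)},\sigma^{(s)}\J-(m-s)\sigma^{(m+s)}$, satisfies a matrix equation of exactly the form required by that Lemma, and hence must vanish. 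So first I would, in each case, rewrite the right-hand side of \eqref{alg-flow} as a zero-curvature expression in a \emph{single} matrix, and then check that this matrix vanishes when $u_n=0$.

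The first step is to pin down the behaviour of $<G_n^{(m)},G_n^{(s)}>$, $<G_n^{(m)},W_n^{(s)}>$ and $<W_n^{(m)},W_n^{(s)}>$ at $u_n=0$, using the normalisations \eqref{2.44} from Proposition \ref{p1}. Since at $u_n=0$ the matrices $G_n^{(m)}$ and $W_n^{(m)}$ are diagonal and constant in $n$ (they are $\frac12 z^{2m}\,\mathrm{diag}(1,-1)$ and $z^{2m}\,\mathrm{diag}(n,-n)$), their mutual commutators vanish, the Gateaux-derivative terms $G^{(m)\prime}[\cdot]$ evaluated at $u_n=0$ vanish because $K^{(s)},\sigma^{(s)}\in\mathcal V_4$ vanish at $u_n=0$, and the remaining $z$-derivative corrections combine cleanly. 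A short computation should give $<G_n^{(m)},G_n^{(s)}>|_{u_n=0}=0$, $<G_n^{(m)},W_n^{(s)}>|_{u_n=0}=\tfrac{m}{2}z^{2(m+s)}\,\mathrm{diag}(1,-1)=G_n^{(m+s)}|_{u_n=0}\cdot(\text{coefficient }m)$ — more precisely it should match $m\,G_n^{(m+s)}|_{u_n=0}$ up to the constant normalisation — and for the $W$--$W$ bracket the diagonal $n$-dependent pieces should reproduce $(m-s)W_n^{(m+s)}|_{u_n=0}$, while the extra term $-\tfrac12(m-s)U_{n,z}z^{2(m+s)+1}$ in \eqref{45c} is precisely the non-isospectral correction attached to $\sigma^{(m+s)}$ in \eqref{zcr-non}. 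Thus, subtracting the appropriate zero-curvature representation \eqref{zcr-iso}/\eqref{zcr-non} for $K^{(m+s)}$ or $\sigma^{(m+s)}$, each difference satisfies the hypotheses of the uniqueness Lemma and is therefore zero.

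Concretely, for the isospectral–isospectral case: \eqref{45a} says $U_n'[\K K^{(m)},K^{(s)}\J]$ is a zero-curvature expression in $<G_n^{(m)},G_n^{(s)}>$, a matrix in $\mathcal Q_2(z)$ which by the previous step vanishes at $u_n=0$; the Lemma then gives $\K K^{(m)},K^{(s)}\J=0$. For the mixed case, I would set $X_n=\K K^{(m)},\sigma^{(s)}\J-mK^{(m+s)}$ and $V_n=<G_n^{(m)},W_n^{(s)}>-mG_n^{(m+s)}$; subtracting $m$ times \eqref{zcr-iso} from \eqref{45b} shows $U_n'[X_n]=(EV_n)U_n-U_nV_n$, and $V_n|_{u_n=0}=0$ by the $u_n=0$ evaluation, so $X_n=0$, i.e. $\K K^{(m)},\sigma^{(s)}\J=mK^{(m+s)}$. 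For the non-isospectral–non-isospectral case, set $X_n=\K\sigma^{(m)},\sigma^{(s)}\J-(m-s)\sigma^{(m+s)}$ and $V_n=<W_n^{(m)},W_n^{(s)}>-(m-s)W_n^{(m+s)}$; here one must check that the non-isospectral correction terms match: the $-\tfrac12(m-s)U_{n,z}z^{2(m+s)+1}$ appearing in \eqref{45c} is exactly $(m-s)$ times the $-\tfrac12 z^{2(m+s)+1}U_{n,z}$ appearing in \eqref{zcr-non} for index $m+s$, so after subtracting $(m-s)$ times \eqref{zcr-non} the equation for $X_n$ is again in the clean form of the Lemma with $V_n|_{u_n=0}=0$, forcing $X_n=0$. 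Finally one notes that these relations, being bilinear and antisymmetric, together with the Jacobi identity (automatic for the commutator $\K\cdot,\cdot\J$ of vector fields) make $\mathcal F=\mathrm{span}\{K^{(m)},\sigma^{(m)}:m\in\mathbb Z\}$ closed, hence a Lie algebra.

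The main obstacle is the bookkeeping in the second step: correctly evaluating the three bracket symbols \eqref{46} at $u_n=0$, keeping track of the $\tfrac12 z$-derivative correction terms and the factor conventions in \eqref{2.44}, so that the residual matrices $V_n$ really do vanish at $u_n=0$ and really are the stated multiples of $G_n^{(m+s)}$ or $W_n^{(m+s)}$. Once that evaluation is done, everything else is a direct application of the uniqueness Lemma and is purely formal. I would also remark that, exactly as in \cite{ZDJ-06-PLA}, these computations are routine and essentially identical to the two-potential case, which is why the paper only states the result here.
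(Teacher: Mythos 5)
Your proposal is correct and takes essentially the same route the paper intends: the proof is omitted here in favour of \cite{ZDJ-06-PLA}, where the argument is precisely this combination of Lemma~\ref{lem-1}, the normalizations \eqref{2.44}, and the uniqueness lemma for the discrete zero-curvature equation. Your evaluations at $u_n=0$ (vanishing of the Gateaux and commutator terms, $\langle G_n^{(m)},W_n^{(s)}\rangle|_{u_n=0}=mG_n^{(m+s)}|_{u_n=0}$, $\langle W_n^{(m)},W_n^{(s)}\rangle|_{u_n=0}=(m-s)W_n^{(m+s)}|_{u_n=0}$, and the matching of the $-\tfrac12(m-s)z^{2(m+s)+1}U_{n,z}$ terms) are exactly the required checks.
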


\subsection{Symmetries for the isospectral and non-isospectral AL hierarchies}

\begin{theorem}
\label{T-4.1}
Any given member $u_{n, t_m}=K^{(m)}$ in the isospectral four-potential AL hierarchy \eqref{hie-iso}
possesses the following two sets of symmetries, i.e.,
\begin{equation}
\label{sym-iso} \{K^{(s)}\} ~~~\mathrm{and} ~~~ \{\tau^{(m,
s)}=m t_m K^{(m+s)}+\sigma^{(s)}\}, ~~~ s\in \mathbb{Z}.
\end{equation}
These symmetries  form a centerless Kac-Moody-Virasoro (KMV) algebra
$\mathcal{S}$ with the following structure
\begin{subequations}
\label{alg-sym-iso}
\begin{align}
\K  K^{(l)}, K^{(s)}\J   &=0,\\
\K  K^{(l)}, \tau^{(m,s)}\J   &=lK^{(l+s)},\\
\K  \tau^{(m, s)}, \tau^{(m, l)} \J   &=(s-l)\tau^{(m, s+l)}.
\end{align}
\end{subequations}
\end{theorem}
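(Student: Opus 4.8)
The plan is to verify the two symmetry claims and the algebra relations \eqref{alg-sym-iso} by reducing everything to the flow algebra $\mathcal{F}$ of Lemma~\ref{lem-2}. First, that $\{K^{(s)}\}_{s\in\mathbb{Z}}$ are symmetries of $u_{n,t_m}=K^{(m)}$ is immediate from the defining condition \eqref{44.11}: since $K^{(s)}$ contains no explicit $t$-dependence, $\frac{\tilde\partial K^{(s)}}{\tilde\partial t}=0$, and by Lemma~\ref{lem-2} we have $\K K^{(m)},K^{(s)}\J=0$, so the symmetry condition $\frac{\tilde\partial\sigma}{\tilde\partial t}=\K K^{(m)},\sigma\J$ holds with $\sigma=K^{(s)}$. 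This handles the first set.

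Second, for $\tau^{(m,s)}=m t_m K^{(m+s)}+\sigma^{(s)}$ I would compute both sides of \eqref{44.11}. The explicit $t_m$-derivative is $\frac{\tilde\partial\tau^{(m,s)}}{\tilde\partial t_m}=mK^{(m+s)}$, since only the coefficient $t_m$ carries explicit time and $\sigma^{(s)}$ does not (here one uses that the non-isospectral flows $\sigma^{(s)}$ defined via $\sigma^{(s)}=L^s\sigma^{(0)}$ with $\sigma^{(0)}$ as in \eqref{sigma0} involve only $n$ explicitly, not $t$). On the other side, using bilinearity of the Lie product and Lemma~\ref{lem-2},
\begin{equation*}
\K K^{(m)},\tau^{(m,s)}\J=m t_m\K K^{(m)},K^{(m+s)}\J+\K K^{(m)},\sigma^{(s)}\J=0+mK^{(m+s)}=mK^{(m+s)},
\end{equation*}
so the two sides agree and $\tau^{(m,s)}$ is indeed a symmetry. (One should note that here $K^{(m)}$ is treated as a vector field with no explicit $t$, so $\K K^{(m)},t_mK^{(m+s)}\J = t_m\K K^{(m)},K^{(m+s)}\J$; the term from differentiating $t_m$ does not appear in the Lie bracket but rather in $\frac{\tilde\partial}{\tilde\partial t_m}$, which is exactly the bookkeeping that makes the two sides match.)

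Third, for the algebra \eqref{alg-sym-iso}: the first relation is just Lemma~\ref{lem-2}. For $\K K^{(l)},\tau^{(m,s)}\J$, expand by bilinearity: the $m t_m K^{(m+s)}$ piece bracketed against $K^{(l)}$ vanishes by $\K K^{(l)},K^{(m+s)}\J=0$, leaving $\K K^{(l)},\sigma^{(s)}\J=lK^{(l+s)}$. For the Virasoro relation $\K\tau^{(m,s)},\tau^{(m,l)}\J$, expand the product of the two sums into four brackets; by Lemma~\ref{lem-2} the $K$–$K$ bracket is zero, the two $K$–$\sigma$ cross terms give $m t_m\big(s\,K^{(m+s+l)}\cdot(-1)^{?}\dots\big)$ — more precisely $m t_m\K K^{(m+s)},\sigma^{(l)}\J - m t_m\K K^{(m+l)},\sigma^{(s)}\J = m t_m\big((m+s)-(m+l)\big)K^{(m+s+l)} = m t_m (s-l)K^{(m+s+l)}$, and the $\sigma$–$\sigma$ bracket gives $\K\sigma^{(s)},\sigma^{(l)}\J=(s-l)\sigma^{(s+l)}$; summing, $(s-l)\big(m t_m K^{(m+s+l)}+\sigma^{(s+l)}\big)=(s-l)\tau^{(m,s+l)}$, as claimed. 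The main obstacle is not any single computation but the careful treatment of the explicit-versus-implicit time dependence — i.e.\ being consistent about the convention \eqref{44.11} so that the $m t_m$ factor produces the "$l$" (rather than "$m+s$" or "$0$") in the second relation and the shift-of-index structure in the Virasoro relation comes out right; once the flow algebra $\mathcal{F}$ of Lemma~\ref{lem-2} is granted, the rest is bilinear bookkeeping.
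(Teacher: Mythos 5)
Your proposal is correct and follows essentially the same route the paper intends: the paper explicitly skips the proof as being analogous to the one in \cite{ZDJ-06-PLA}, and that standard argument is precisely what you carry out --- reduce everything to the flow algebra of Lemma \ref{lem-2}, check the symmetry condition \eqref{44.11} by separating explicit from implicit time dependence, and obtain the commutation relations \eqref{alg-sym-iso} by bilinear expansion. Your index bookkeeping in the Virasoro relation, $m t_m\bigl((m+s)-(m+l)\bigr)K^{(m+s+l)}=m t_m(s-l)K^{(m+s+l)}$, is exactly the cancellation that makes the structure constants come out as $(s-l)$, so nothing is missing.
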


Obviously, the Lie algebras $\mathcal{F}$ and $\mathcal{S}$ are
respectively generated by the following elements
\begin{align}
&\{\sigma^{(1)}~~ (\mathrm{or} ~\sigma^{(-1)}),~~~~ \sigma^{(2)},~~~~
\sigma^{(-2)},~~~~ K^{(1)}~~
(\mathrm{or} ~K^{(-1)})\};\label{51}\\
&\{\tau^{(m, 1)}~~ (\mathrm{or} ~\tau^{(m, -1)}),~~~~\tau^{(m, 2)},~~~~
\tau^{(m, -2)}, ~~~~K^{(1)}~~ (\mathrm{or}~ K^{(-1)})\}.\label{52}
\end{align}

\begin{theorem}
\label{T-4.2} Any given member $u_{n, t_m}=\sigma^{(m)}$ in the
non-isospectral four-potential AL hierarchy \eqref{hie-non} has two
sets of symmetries, i.e.,
\begin{subequations}
\label{sym-non}
\begin{align}
&\eta^{(m, s)}=\sum^{s}_{j=0}C_{s}^{j}(m t_m)^{s-j}\sigma^{(m-jm)}~~~~(s=0, 1, 2, \cdots),\label{56}\\
&\gamma^{(m, s)}
=\sum^{s}_{j=0}C_{s}^{j}(m t_m)^{s-j}K^{(-jm)}~~~~(s=0, 1, 2,
\cdots),\label{57}
\end{align}
\end{subequations}
which we call  $\eta$-symmetries and $\gamma$-symmetries,
respectively. Here $C^{j}_{s}=\frac{s!}{j!(s-j)!} $. These
symmetries form a centerless  KMV algebra ${\mathcal{H}}$ with the
following structure
\begin{subequations}
\label{alg-sym-non}
\begin{align}
\K  \eta^{(m, s)}, \eta^{(m, l)}\J   & =(l-s)m \eta^{(m, s+l-1)},\label{58}\\
\K  \gamma^{(m, s)}, \gamma^{(m, l)}\J   & =0,\label{59}\\
\K  \eta^{(m, s)}, \gamma^{(m, l)}\J   &
=lm\gamma^{(m,s+l-1)}.\label{60}
\end{align}
\end{subequations}
The algebra can be generated by
\begin{equation}
\label{61}
\{\eta^{(m, 0)},~~~~ \eta^{(m, 3)}, ~~~~\gamma^{(m, 1)}\}.
\end{equation}
\end{theorem}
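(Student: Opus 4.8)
The plan is to prove Theorem \ref{T-4.2} by first verifying that $\eta^{(m,s)}$ and $\gamma^{(m,s)}$ are genuine symmetries of the equation $u_{n,t_m}=\sigma^{(m)}$, and then to compute their Lie products. For the symmetry verification, recall from \eqref{44.11} that $\sigma$ is a symmetry of $u_{n,t_m}=\sigma^{(m)}$ iff $\frac{\tilde\partial\sigma}{\tilde\partial t_m}=\K\sigma^{(m)},\sigma\J$. Since $\eta^{(m,s)}$ and $\gamma^{(m,s)}$ depend on $t_m$ explicitly through powers of $(mt_m)$, I would compute $\frac{\tilde\partial}{\tilde\partial t_m}$ term by term using $\frac{d}{dt_m}(mt_m)^{s-j}=m(s-j)(mt_m)^{s-j-1}$, and compare with the bracket $\K\sigma^{(m)},\eta^{(m,s)}\J$ expanded via bilinearity. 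Here the key inputs are the flow relations from Lemma \ref{lem-2}: $\K\sigma^{(m)},\sigma^{(l)}\J=(m-l)\sigma^{(m+l)}$ and $\K K^{(l)},\sigma^{(m)}\J=-\K\sigma^{(m)},K^{(l)}\J=-lK^{(l+m)}$, so that $\K\sigma^{(m)},\sigma^{(m-jm)}\J=(m-(m-jm))\sigma^{(m+m-jm)}=jm\,\sigma^{(2m-jm)}$ and $\K\sigma^{(m)},K^{(-jm)}\J=(-jm)K^{(-jm+m)}$. The bookkeeping is a matched telescoping: the explicit $t_m$-derivative of the $(mt_m)^{s-j}$ factor produces exactly the term the flow bracket generates by shifting $j\mapsto j-1$, and the binomial identity $C_s^{j-1}(s-j+1)=C_s^j\cdot j\cdot\frac{\text{appropriate ratio}}{\cdot}$ — more precisely $\binom{s}{j}j=\binom{s}{j-1}(s-j+1)$ — makes the two sides coincide.

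Next, for the algebraic structure \eqref{58}--\eqref{60}, I would substitute the definitions \eqref{56}, \eqref{57} into the Lie products and expand using bilinearity of $\K\cdot,\cdot\J$ together with the Leibniz-type behaviour of $\frac{\tilde\partial}{\tilde\partial t_m}$; one has the identity $\K f,g\J$ acting on products $t_m^a\,h$ where $h$ is $t_m$-independent, which splits into the pure bracket part $t_m^a\K\ldots,h\J$-type terms plus terms coming from the $t_m$-derivative in the Fr\'echet derivative. Concretely, write $\eta^{(m,s)}=\sum_j C_s^j(mt_m)^{s-j}\sigma^{(m-jm)}$ with each $\sigma^{(m-jm)}$ free of explicit $t_m$; then $\K\eta^{(m,s)},\eta^{(m,l)}\J$ produces a double sum whose terms are of two kinds — those from $[\sigma^{(m-jm)}{}',\ \cdot\ ]$ giving $(mt_m)^{(s-j)+(l-k)}\K\sigma^{(m-jm)},\sigma^{(m-km)}\J$, and those from differentiating the scalar factors $(mt_m)^{s-j}$ and $(mt_m)^{l-k}$ against each other. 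Using $\K\sigma^{(m-jm)},\sigma^{(m-km)}\J=(k-j)m\,\sigma^{(2m-(j+k)m)}$ and regrouping by the total power $p=(s-j)+(l-k)$ and the shift index $j+k$, I expect the two kinds of terms to recombine, via binomial-coefficient identities (Vandermonde-type, $\sum_{j+k=r}C_s^jC_l^k=C_{s+l}^r$, and its weighted variants), into $(l-s)m\,\eta^{(m,s+l-1)}$. The relations \eqref{59} and \eqref{60} follow analogously: \eqref{59} uses $\K K^{(a)},K^{(b)}\J=0$ so only the scalar-factor cross terms survive and they cancel by antisymmetry; \eqref{60} uses $\K\sigma^{(m-jm)},K^{(-km)}\J=(-km)K^{(m-(j+k)m)}$ and a parallel binomial regrouping.

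For the final claim — that $\mathcal H$ is generated by $\{\eta^{(m,0)},\eta^{(m,3)},\gamma^{(m,1)}\}$ — I would argue by repeatedly applying the bracket relations \eqref{58}--\eqref{60}. From \eqref{58}, $\K\eta^{(m,0)},\eta^{(m,3)}\J=3m\,\eta^{(m,2)}$ and $\K\eta^{(m,3)},\eta^{(m,2)}\J=-m\,\eta^{(m,4)}$, and more generally brackets of $\eta^{(m,0)}$ (which acts as a lowering/degree operator) and $\eta^{(m,2)},\eta^{(m,3)},\eta^{(m,4)},\dots$ reach every $\eta^{(m,s)}$ with $s\ge 0$; the only subtlety is producing $\eta^{(m,1)}$, which comes out of a suitable combination such as $\K\eta^{(m,0)},\eta^{(m,2)}\J=2m\,\eta^{(m,1)}$. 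Then \eqref{60} with $l=1$ gives $\K\eta^{(m,s)},\gamma^{(m,1)}\J=m\,\gamma^{(m,s)}$, so from $\gamma^{(m,1)}$ and the already-generated $\eta^{(m,s)}$ one recovers all $\gamma^{(m,s)}$, $s\ge 0$. This is essentially the standard Virasoro-plus-module generation argument.

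The main obstacle I anticipate is the combinatorial bookkeeping in the algebraic structure step: keeping track of the binomial coefficients, the argument-shifts in the superscripts $(m-jm)$, and the interaction between the explicit-$t_m$ differentiation and the Fr\'echet derivative (the $\frac{\tilde\partial}{\tilde\partial t_m}$ versus $\K\cdot,\cdot\J$ distinction in \eqref{44.11}) is error-prone, and one must be careful that the "extra" $-\frac12 z^{2m+1}U_{n,z}$ term in \eqref{zcr-non} and the corresponding correction term in \eqref{45c} are consistently accounted for when passing from the flow algebra $\mathcal F$ to the symmetry algebra $\mathcal H$. However, since the analogous computation for the two-potential case is carried out in \cite{ZDJ-06-PLA} and the recursion operator $L$, the flows $\sigma^{(m)}$, $K^{(m)}$, and their algebra (Lemma \ref{lem-2}) all have exactly the same formal structure here, the proof should go through verbatim, and — as the authors note — I would follow \cite{ZDJ-06-PLA} and omit the routine verification, recording only the key identities above.
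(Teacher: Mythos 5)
Your outline is correct and is exactly the argument the paper intends: the authors explicitly skip all proofs in Section 4, deferring to the identical computation for the two-potential case in \cite{ZDJ-06-PLA}, and your route --- verifying the symmetry condition \eqref{44.11} via Lemma \ref{lem-2} with the telescoping identity $C_s^{j+1}(j+1)=C_s^j(s-j)$, and then closing the brackets with the weighted Vandermonde identities $\sum_{j+k=r}C_s^jC_l^k\,k=l\,C_{s+l-1}^{r-1}$ and $\sum_{j+k=r}C_s^jC_l^k\,(k-j)=(l-s)C_{s+l-1}^{r-1}$ --- is the standard one and does yield the stated structure constants and generators. One small correction: since the Gateaux derivative \eqref{def-gat} acts only on the $u_n$-dependence, the factors $(mt_m)^{s-j}$ pass through the Lie product as constants, so the ``cross terms from differentiating the scalar factors against each other'' that you anticipate simply do not arise, and the pure-bracket double sum alone already recombines into $(l-s)m\,\eta^{(m,s+l-1)}$ and $lm\,\gamma^{(m,s+l-1)}$.
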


\subsection{Relations between the recursion operator and flows}

\begin{theorem}
The isospectral flows $\{K^{(m)}\}$, non-isospectral flows $\{\sigma^{(m)}\}$ and recursion operator $L$
satisfy the relations
\begin{align}
&L^{\prime}[K^{(m)}]-[K^{(m)\prime}, L]=0,\label{53}\\
&L^{\prime}[\sigma^{(m)}]-[\sigma^{(m)\prime},
L]-L^{m+1}=0,\label{54}
\end{align}
where $m\in\mathbb{Z}$ and $[A,B]=AB-BA$.
\end{theorem}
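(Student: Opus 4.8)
The plan is to derive both identities from the structure already established in Section~3: the flows $K^{(m)}$, $\sigma^{(m)}$ are generated by the recursion operator via \eqref{hie-iso} and \eqref{hie-non}, and the recursion operator $L$ itself is characterized by the zero-curvature relation \eqref{zce-rec}--\eqref{map:L}. The key observation is that \eqref{53} and \eqref{54} are precisely the infinitesimal form of the statements ``$L$ commutes with the isospectral flow'' and ``$L$ is mapped to $L$ plus a correction under the non-isospectral flow''. So the natural route is to differentiate the defining relation for $L$ along a flow and then invoke the uniqueness (injectivity) built into Lemma~3.2.

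First I would fix $m\in\mathbb{Z}$ and consider, for an arbitrary $Y_n\in\mathcal{V}_4$, the pair $(X_n,V_n)$ with $X_n=LY_n$ determined by \eqref{zce-rec} with $\alpha=2$, i.e. $U_n'[X_n-z^2 Y_n]=(EV_n)U_n-U_nV_n$, $V_n|_{u_n=0}=0$. I would differentiate this matrix identity in the direction of the isospectral flow $K^{(m)}$, using $\tilde\partial/\tilde\partial t_m$ notation to track the explicit-time pieces (there are none here since $z_{t_m}\equiv0$ and $K^{(m)}$ carries no explicit $t$), and apply the Leibniz rule together with the operator identity $\eqref{identity-1}$ adapted to the situation where one argument is the $z$-independent operator argument $X_n - z^2 Y_n$. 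Collecting terms, the left side becomes $U_n'[(L'[K^{(m)}])Y_n + L\,Y_n'[K^{(m)}] - (X_n'[K^{(m)}] - \dots)]$, and after using that $Y_n$ was arbitrary and that $\langle G_n^{(m)},V_n\rangle$ (in the sense of \eqref{46a}) is again a legitimate $\mathcal{Q}_2(z)$ element vanishing at $u_n=0$, Lemma~3.2 forces the bracketed $\mathcal{V}_4$ expression to vanish. That vanishing, once $Y_n$ is stripped off, is exactly $L'[K^{(m)}] = [K^{(m)\prime},L]$, which is \eqref{53}.

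For \eqref{54} I would run the identical argument but differentiate along the non-isospectral flow $\sigma^{(m)}$ instead. The new feature is the explicit $z$-evolution $z_{t_m}=\tfrac12 z^{2m+1}$, which feeds an extra $U_{n,z}$-type term into the differentiated zero-curvature relation (compare the $-\tfrac12 z^{2m+1}U_{n,z}$ in \eqref{zcr-non} and the $-U_{n,z}\cdot z/2$ in \eqref{zcr-non-0}). Tracking this term through the computation: differentiating $U_n'[X_n-z^2Y_n]=\dots$ along $\sigma^{(m)}$ produces, besides the Leibniz terms, a contribution from $\partial_z$ acting on the explicit $z^2$ factor multiplied by $z_{t_m}$, namely something proportional to $z^{2m+1}\cdot z Y_n$ after using $\partial_z(z^2 Y_n)=2z Y_n$. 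Recognizing $z^{2m+2}Y_n$ (up to the $\tfrac12$ and the sign conventions) as $z^2$ applied $m+1$ extra times, and matching it against the recursion that defines $L$, this term reassembles into $L^{m+1}Y_n$; then Lemma~3.2 again kills the $\mathcal{V}_4$-part, giving $L'[\sigma^{(m)}]-[\sigma^{(m)\prime},L]-L^{m+1}=0$ after removing the arbitrary $Y_n$.

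The main obstacle will be the bookkeeping in the second identity: one has to be careful that the extra $z_{t_m}U_{n,z}$ terms from \emph{both} the defining relation for $L$ (which at $\alpha=2$ already contains $z^2$ explicitly) and from the non-isospectral correction in \eqref{zcr-non-0} combine with exactly the right power of $z$ and the right numerical coefficient to reconstitute $L^{m+1}$ rather than $L^{m}$ or a multiple thereof — in particular verifying that the power shift $z^{2}\mapsto z^{2m+2}$ corresponds under the $\alpha=2$ map precisely to applying $L$ an additional $m+1$ times, not $m$. I expect this to follow cleanly once one writes the non-isospectral recursion in the same normalization used for \eqref{ut+}, but it is where a sign or an index slip would most easily creep in. Everything else — the Leibniz expansions, the use of \eqref{identity-1}, and the final appeal to injectivity via Lemma~3.2 — is routine and parallels the arguments already referenced in \cite{ZDJ-06-PLA}.
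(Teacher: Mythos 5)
Your plan is sound, and it is essentially the argument the paper intends: the paper gives no proof of this theorem at all (Section 4 opens by declaring that all proofs are skipped because they are ``quite similar to those in \cite{ZDJ-06-PLA}''), and the method of that reference is exactly what you describe --- differentiate the defining relation $U_n'[LY_n-z^2Y_n]=(EV_n)U_n-U_nV_n$ along a flow, use the product identity \eqref{identity-1}, and kill the residual $\mathcal{V}_4$-part with the injectivity lemma. The bookkeeping you flag for \eqref{54} does close up with the right coefficient: writing $[\![LY-z^2Y,\sigma^{(m)}]\!]=(L'[\sigma^{(m)}]-[\sigma^{(m)\prime},L])Y+(L-z^2)[\![Y,\sigma^{(m)}]\!]$, the only term that does not reassemble into zero-curvature form comes from $(U_{n,z})'[LY-z^2Y]=\partial_z(U_n'[LY-z^2Y])-U_n'[-2zY]$ hitting the explicit $z^2$, which contributes $+z^{2m+2}U_n'[Y]$; the telescoping identity $L^{m+1}Y-z^{2m+2}Y=\sum_{j=0}^{m}z^{2j}(L-z^2)L^{m-j}Y$ (with the $\alpha=-2$ analogue for $m<0$) then converts $z^{2m+2}U_n'[Y]$ into $U_n'[L^{m+1}Y]$ plus admissible zero-curvature terms, yielding coefficient exactly $1$ on $L^{m+1}$. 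Two minor points: the injectivity result you invoke is the unnumbered Lemma of Section 3.3 (there is no ``Lemma 3.2''), and identity \eqref{53} is already contained in Proposition 3.2, since ``$L$ is a strong symmetry of the hierarchy'' means precisely $L'[K^{(m)}]=[K^{(m)\prime},L]$ for every $m$.
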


Utilizing the relations one can also derive symmetries \eqref{sym-iso} and their algebra structure \eqref{alg-sym-iso}
by means of inductive approach, as in \cite{Tian-book-90}.
In that way the non-isospectral flow $\sigma^{(0)}$ will play the role of a master symmetry \cite{Fuchssteiner-1983}.

\section{Reduction to two-potential case}

It is possible to reduce isospectral, non-isospectral AL hierarchies and their symmetries
from four-potential case to two-potential case.

\subsection{Reduction of $(S_n,T_n)=(0,0)$}

\subsubsection{Spectral problem}

Let us start from the four-potential AL-spectral problem \eqref{4p-new}.
Taking $(S_n,T_n)=(0,0)$ in \eqref{4p-new} yields
\begin{equation}
\Phi_{n+1}=\overline{U}_n\Phi_{n},~~~
\overline{U}_n=\left(
\begin{array}{cc} z^{2} & zQ_{n}\\
z^{-1}R_{n}             & z^{-2}
\end{array}\right),~~~\Phi_{n}=\left(
\begin{array}{cc} \phi_{1,n}\\
\phi_{2,n}
\end{array}\right),
\label{2p-AL-new}
\end{equation}
but this is not the canonical form \eqref{2p-AL} yet.
Next we introduce a gauge transformation
\begin{equation}
\label{gauge}
\Psi_{n}=\left(
\begin{array}{cc} z^{-1} & 0\\
0             & 1
\end{array}\right)\Phi_n,~~~\Psi_{n}=\left(
\begin{array}{cc} \psi_{1,n}\\
\psi_{2,n}
\end{array}\right),
\end{equation}
under which \eqref{2p-AL-new} becomes
\begin{equation}
\Psi_{n+1}=M_n\Psi_{n},~~~
M_n=\left(
\begin{array}{cc} z^{2} & Q_{n}\\
R_{n}             & z^{-2}
\end{array}\right),
\label{2p-AL-can}
\end{equation}
which is the canonical form \eqref{2p-AL} in the light of $z^2=\lambda$.

\subsubsection{Closeness discussion}

For convenience we introduce some notations. Let
\begin{equation}
\overline{u}_{n}=\left(
\begin{array}{c} Q_{n}\\R_{n}
\end{array}\right),~~
{\overline{K}}^{(m)}=\left(
\begin{array}{c} K_1^{(m)}\\K_2^{(m)}
\end{array}\right)_{(S_n,T_n)=(0,0)},~~
\overline{\sigma}^{(m)}=\left(
\begin{array}{c} \sigma_1^{(m)}\\ \sigma_2^{(m)}
\end{array}\right)_{(S_n,T_n)=(0,0)},~~
\end{equation}
where $K_j^{(m)}$ and $\sigma_j^{(m)}$ are the $j$-th elements of $K^{(m)}$ and $\sigma^{(m)}$.

Noting that the reduction $(S_n,T_n)=(0,0)$ yields
$u_n|_{(S_n,T_n)=(0,0)}=(\overline{u}_{n}^T,0,0)^T$, it is necessary
to discuss whether the reduction is closed, i.e., whether
\begin{equation}
{K}^{(m)}\bigr|_{(S_n,T_n)=(0,0)}=\left(
\begin{array}{c} \overline{K}^{(m)}\\0\\0
\end{array}\right),~~
\sigma^{(m)}\bigr|_{(S_n,T_n)=(0,0)}=\left(
\begin{array}{c} \overline{\sigma}^{(m)}\\0\\0
\end{array}\right).
\end{equation}
Next we will see  this is true for $m=2h, h\in \mathbb{Z}$.
\begin{theorem}
\label{T-reduce}
Under the reduction $(S_n,T_n)=(0,0)$, those even order members in the four-potential AL hierarchies
\eqref{hie-iso} and \eqref{hie-non} reduce to two-potential isospectral AL-hierarchies
\begin{equation}
\label{hie-iso-2p}
\overline{u}_{n, t_{2h}}=\overline{K}^{(2h)}=\overline{L}^{h}\overline{K}^{(0)}, \qquad h\in \mathbb{Z}
\end{equation}
and
\begin{equation}
\label{hie-non-2p}
\overline{u}_{n, t_{2h}}=\overline{\sigma}^{(2h)}=\overline{L}^{h}\overline{\sigma}^{(0)}, \qquad h\in \mathbb{Z},
\end{equation}
where
\begin{equation}
\overline{K}^{(0)}=(Q_{n},
-R_{n})^T,~~~\overline{\sigma}^{(0)}=(2n+\frac{1}{2})(Q_{n},
-R_{n})^T, \label{K0-sigma0-2p}
\end{equation}
and the recursion operator $\overline{L}$ is
\begin{equation}
\overline{L}=\!\!\biggl(
\begin{array}{cc} E& 0\\0 & E^{-1}
\end{array}\biggr)\!+\left(
\begin{array}{cc} -Q_{n}E\\R_{n}
\end{array}\right)\!(E-1)^{-1}(R_{n}E, Q_{n}E^{-1})+\gamma_{n}^{2}\left(
\begin{array}{cc} -Q_{n+1} E\\R_{n-1}
\end{array}\right)\!(E-1)^{-1}\biggl(\frac{R_{n}}{\gamma_{n}^{2}},
\frac{Q_{n}}{\gamma_{n}^{2}}\biggr).
\end{equation}
\end{theorem}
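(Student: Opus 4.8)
The plan is to establish the theorem by tracking how the four-potential recursion relations \eqref{ut+}, \eqref{ut-} and the explicit operators $L_1$, $L_2$, $L$ behave under the substitution $(S_n,T_n)=(0,0)$. First I would observe the crucial \emph{parity} feature of the construction: the flows $K^{(m)}$ and $\sigma^{(m)}$ are built by alternately applying $L_1^{-1}$ and $L_2$ (equivalently powers of $L=-L_2L_1^{-1}$) starting from $K^{(0)}=(Q_n,-R_n,S_n,-T_n)^T$. Inspecting \eqref{L1}--\eqref{L2}, one sees that $L_1$ maps a vector of the block form $(\ast,\ast,0,0)^T$ into one whose third and fourth components are of the form $S_n(\cdots)+T_n(\cdots)$, hence vanish when $(S_n,T_n)=(0,0)$, while $L_2$ does the opposite: it sends $(\ast,\ast,0,0)^T$ to something with generically nonzero lower block (note the $-\pi_n^2 E$ entry in position $(3,1)$, which is $-E$ when $S_n=T_n=0$). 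Therefore a single application of $L$ does \emph{not} preserve the reduced subspace $\{(\ast,\ast,0,0)^T\}$, but $L^2=-L_2L_1^{-1}\cdot(-L_2L_1^{-1})$, read as $L_2(-L_1^{-1}L_2)L_1^{-1}$, does: the inner combination $-L_1^{-1}L_2$ appearing in the recursion \eqref{ut+} maps the reduced block to itself (both $L_2$ then $L_1^{-1}$ kill the lower block after restriction, by the structure just noted), and the outermost $L_2$ together with the outermost $L_1^{-1}$ on $K^{(0)}$ cancel in pairs when $m$ is even. This is the conceptual heart of why only even-order members survive.

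Concretely, I would proceed as follows. Step one: show by induction on $h\ge 0$ that the intermediate quantities $(-B_n^{(2j)},C_n^{(2j)},-b_n^{(2j)},c_n^{(2j)})^T$ occurring in \eqref{ut+} have vanishing third and fourth components when $(S_n,T_n)=(0,0)$, while the odd-indexed ones have vanishing \emph{first and second} components; the base case $j=1$ is \eqref{ut+c} evaluated at $(S_n,T_n)=(0,0)$ using the explicit $L_1^{-1}$ in \eqref{L1-inv}, and the inductive step uses $-L_1^{-1}L_2$ together with the block-structure observations above. Step two: feed this into \eqref{ut+a} with $m=2h$; the right-hand side is $L_2$ applied to an odd-type vector $(-B_n^{(2h)},C_n^{(2h)},0,0)^T$ — wait, at even $m=2h$ the last computed vector is the $2h$-th one, which is even-type, so its restriction is $(\cdot,\cdot,0,0)^T$, and applying $L_2$ lands in the lower block; to get the $(\ast,\ast,0,0)^T$ form one instead uses that $L^{2h}K^{(0)}=L_2L_1^{-1}L_2L_1^{-1}\cdots$ and the outer $L_2L_1^{-1}$ structure closes — so I would rather phrase the induction directly at the level of $K^{(m)}$: prove $K^{(2h)}|_{(S,T)=0}=(\overline K^{(2h)},0,0)^T$ and $K^{(2h+1)}|_{(S,T)=0}=(0,0,\ast,\ast)^T$ simultaneously by induction on $h\in\mathbb Z$ (both directions, using $L$ and $L^{-1}$ from \eqref{L} and the analogue), which is cleaner. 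Step three: having the closeness, restrict the identity $K^{(2h)}=L^2 K^{(2h-2)}$ to $(S_n,T_n)=(0,0)$ and read off the induced operator on the first two components; a direct block computation of $(L^2)|_{(S,T)=0}$ restricted to the reduced subspace must be shown to equal the stated $\overline L$. The non-isospectral case \eqref{hie-non-2p} is entirely parallel, starting from $\sigma^{(0)}=(2n+\tfrac12)(Q_n,-R_n,\cdot,\cdot)^T$ whose restriction already has the right form, and using \eqref{ut-} / the $\sigma$-recursion; here I would also invoke \eqref{54} evaluated on the reduced variables if a slicker route is wanted.

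The main obstacle I anticipate is Step three: verifying that $(L^2)|_{(S_n,T_n)=(0,0)}$, restricted to the reduced two-component subspace, collapses exactly to the displayed $\overline L$. This is a genuine $4\times4$-into-$2\times2$ block computation involving the three rank-one $(E-1)^{-1}$ terms in \eqref{L}; most of their contributions involve $S_n$ or $T_n$ explicitly and drop, but the mixed term $-\bigl(S_n\gamma_n^2,-T_{n-1}\gamma_n^2,Q_{n+1}\pi_n^2E,-R_n\pi_n^2\bigr)^T(E-1)^{-1}(\tfrac{R_n}{\gamma_n^2},\tfrac{Q_n}{\gamma_n^2},\tfrac{T_n}{\pi_n^2},\tfrac{S_n}{\pi_n^2})$ contributes, after composition with the second copy of $L$ and restriction, precisely the $\gamma_n^2(-Q_{n+1}E,R_{n-1})^T(E-1)^{-1}(\tfrac{R_n}{\gamma_n^2},\tfrac{Q_n}{\gamma_n^2})$ piece of $\overline L$, and one must check the shifts ($Q_{n+1}$, $R_{n-1}$, the placement of $E$ and $E^{-1}$) come out correctly — sign and index bookkeeping is where errors would hide. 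A useful sanity check en route is to confirm $\overline L\,\overline K^{(0)}=\overline K^{(2)}$ against the explicit low-order equations listed in Appendix \ref{A-1}, and to note $\overline L=(L^2)|_{(S,T)=0}$ is consistent with the abstract statement that reduction should conjugate the recursion structure, so that the algebras $\mathcal F,\mathcal S,\mathcal H$ descend with $L\mapsto\overline L$ as claimed in the introduction. The remaining assertions of the theorem — that the reduced hierarchies are $\overline u_{n,t_{2h}}=\overline L^h\overline K^{(0)}$ and $\overline L^h\overline\sigma^{(0)}$ with the stated seeds — are then immediate from \eqref{hie-iso}, \eqref{hie-non} and the closeness, using $L^{2h}=( L^2)^h$ and restriction.
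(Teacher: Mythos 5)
Your plan is essentially the paper's proof: the paper simply notes that $K^{(0)}$ and $\sigma^{(0)}$ reduce to $(\overline{K}^{(0)},0,0)^T$ and $(\overline{\sigma}^{(0)},0,0)^T$ and then verifies by direct calculation that $L^2\bigr|_{(S_n,T_n)=(0,0)}$ is block-diagonal with upper-left block $\overline{L}$, which is precisely your Step three, from which closure and the identification of the reduced recursion operator follow at once. (Your preliminary parity claims about $L_1$ and $L_2$ separately are not quite accurate --- e.g.\ the $\pi_n^2E$ entry of $L_1$ survives the reduction --- but you abandon that route yourself, and the block anti-diagonality of $L\bigr|_{(S_n,T_n)=(0,0)}$ on which your corrected argument rests does hold.)
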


\begin{proof}
First, look at \eqref{K0} and \eqref{sigma0}. When $m=0$, the reduction $(S_n,T_n)=(0,0)$ is obviously closed and the resulting flows
$\overline{K}^{(0)}$ and $\overline{\sigma}^{(0)}$
are as given in \eqref{K0-sigma0-2p}. Meanwhile, by direct calculation
we find
\begin{align*}
L^2\bigr|_{(S_n,T_n)=(0,0)}= \left(
\begin{array}{cc} \overline{L}& \mathbf{0}\\
                   \mathbf{0}  & H
\end{array}\right),
\end{align*}
where
\begin{align*}
H=\left(
\begin{array}{cc} \gamma_{n+1}^{2}E & -Q_{n+1}^{2} \\
                   R_{n}^{2}  &
                   (1+Q_{n}R_{n})E^{-1}
\end{array}\right)-2\left(
\begin{array}{cc} Q_{n+1}E \\
                   -R_{n}\end{array}\right)(E-1)^{-1}(R_{n}, Q_{n}E^{-1}).
\end{align*}
That means the closeness is valid for those even order members of four-potential AL hierarchies
\eqref{hie-iso} and \eqref{hie-non} when we take $(S_n,T_n)=(0,0)$.
Thus the proof is completed.
\end{proof}

\subsubsection{Flows under gauge transformation}

With the closeness in hand, let us make a comparison  for the present results and those in \cite{ZDJ-06-PLA}.
Isospectral flows and recursion operator are exactly same but non-isospectral flows are different.
The basic non-isospectral flow  $\widehat{\sigma}^{(0)}$ given in \cite{ZDJ-06-PLA} is
\begin{equation}
\widehat{\sigma}^{(0)}=(2n+1)(Q_{n}, -R_{n})^T,
\label{sigma0-2p-PLA}
\end{equation}
and the difference from $\overline{\sigma}^{(0)}$ is
\begin{equation}
\overline{\sigma}^{(0)}-\widehat{\sigma}^{(0)}=-\frac{1}{2}(Q_{n},
-R_{n})^T.
\label{sigma0-2p-dif}
\end{equation}
To understand the difference on non-isospectral flows, we go back to the gauge transformation \eqref{gauge}.

Obviously, the Lax pair of $\overline{u}_{n, t_{0}}=\overline{\sigma}^{(0)}$
is composed by \eqref{2p-AL-new} and
\begin{equation}
\label{2p-new-time}
\Phi_{n,t_0}=\overline{V}^{(0)}_{n}\Phi_{n}, \qquad \overline{V}^{(0)}_n={V}^{(0)}_n |_{(S_n,T_n)=(0,0)},
\end{equation}
of which the zero-curvature equation reads
\begin{equation}
\label{2p-new-zce}
\overline{U}_{n,
t_0}=(E\overline{V}^{(0)}_{n})\overline{U}_{n}-\overline{U}_{n}\overline{V}^{(0)}_{n}.
\end{equation}
Under the gauge transformation \eqref{gauge}, \eqref{2p-new-time} turns out to be
\begin{equation}
\Psi_{n, t_0}=\widehat{V}^{(0)}_{n}\Psi_{n}
\end{equation}
where
\begin{equation}
\widehat{V}^{(0)}_{n}=\widehat{N}^{(0)}_{n}-
\left(\begin{array}{cc} z^{-1}z_{t}& 0\\
                   0  & 0
\end{array}\right),
~~z_t=\frac{z}{2},~~
\widehat{N}^{(0)}_{n}=\left(
\begin{array}{cc} z^{-1}& 0\\
                   0  & 1
\end{array}\right)
\overline{V}^{(0)}_{n}\left(
\begin{array}{cc} z& 0\\
                   0  & 1
\end{array}\right).
\end{equation}
Meanwhile the
zero-curvature equation \eqref{2p-new-zce} is transformed to
\begin{equation}
M_{n, t_0}-(E\widehat{V}^{(0)}_{n})M_{n}+M_{n}\widehat{V}^{(0)}_{n}=0,
\end{equation}
i.e.,
\begin{equation}
M_{n, t_0}-(E\widehat{N}^{(0)}_{n})M_{n}+M_{n}\widehat{N}^{(0)}_{n}
=\frac{1}{2}\left(
\begin{array}{cc} 0& -Q_{n}\\
            R_{n}  & 0
\end{array}\right).
\label{zce-gu-tr}
\end{equation}
Compared with \cite{ZDJ-06-PLA} we find that
the l.h.s. of \eqref{zce-gu-tr} equating to zero yields the non-isospectral equation
$\overline{u}_{n,t_0}=\widehat{\sigma}^{(0)}$ while
the r.h.s. just gives the difference \eqref{sigma0-2p-dif}.

Now we can see that it is just the time-dependent multiplier
$\left(
\begin{smallmatrix} z^{-1}& 0\\
                   0  & 1
\end{smallmatrix}\right)$
in the transformation \eqref{gauge} contributes the extra term \eqref{sigma0-2p-dif} for the
non-isospectral flow $\overline{\sigma}^{(0)}$,
but in isospectral case this multiplier adds nothing new since in this turn $z$ is independent of time.
However, the extra term in \eqref{sigma0-2p-dif} is nothing but $\frac{1}{2}\overline{K}^{(0)}$,
which means the obtained non-isospectral flow $\overline{\sigma}^{(2h)}$
is only a summation of $\widehat{\sigma}^{(h)}$ and $-\frac{1}{2}\overline{K}^{(2h)}$,
where $\widehat{\sigma}^{(h)}=\overline{L}^{h}\widehat{\sigma}^{(0)}$
is the non-isospectral flow derived in \cite{ZDJ-06-PLA}.

\subsection{Symmetries}

The closeness also guarantees reduction of symmetries and their algebraic structures.
From Theorem \ref{T-4.1} and \ref{T-4.2} one can directly have
\begin{theorem}
\label{T-5.2}
Any given member $\overline{u}_{n, t_{2h}}=\overline{K}^{(2h)}$ in the isospectral two-potential AL hierarchy \eqref{hie-iso-2p}
can have two sets of symmetries
\begin{equation}
\label{sym-iso-2p}
\{\overline{K}^{(2s)}\} ~~~\mathrm{and} ~~~
\{\overline{\tau}^{(2h,
2s)}=2ht_{2h}\overline{K}^{(2h+2s)}+\overline{\sigma}^{(2s)}\}, ~~~
s\in \mathbb{Z},
\end{equation}
which  form a centerless KMV algebra $\overline{\mathcal{S}}$ with
structure
\begin{subequations}
\label{alg-sym-iso-2p}
\begin{align}
\K  \overline{K}^{(2l)}, \overline{K}^{(2s)}\J   &=0,\\
\K  \overline{K}^{(2l)}, \overline{\tau}^{(2h,2s)}\J   &=2l\overline{K}^{(2l+2s)},\\
\K  \overline{\tau}^{(2h, 2s)}, \overline{\tau}^{(2h, 2l)}
\J   &=2(s-l)\overline{\tau}^{(2h, 2s+2l)},
\end{align}
\end{subequations}
and generators
\begin{equation}
\{\overline{\tau}^{(2h, 2)}~~ (\mathrm{or} ~\overline{\tau}^{(2h,
-2)}),~~~~\overline{\tau}^{(2h, 4)},~~~~ \overline{\tau}^{(2h,
-4)}, ~~~~\overline{K}^{(2)}~~ (\mathrm{or}~ \overline{K}^{(-2)})\}.
\end{equation}
\end{theorem}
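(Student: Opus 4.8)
The plan is to deduce Theorem~\ref{T-5.2} directly from the four-potential results (Theorems~\ref{T-4.1} and~\ref{T-4.2}) together with the closeness established in Theorem~\ref{T-reduce}, rather than redoing any zero-curvature computation. The key observation is that the restriction map $\pi:\mathcal{V}_4\to\mathcal{V}_2$ which sends a vector field $f=(f^{(1)},f^{(2)},f^{(3)},f^{(4)})^T$ to $(f^{(1)},f^{(2)})^T\bigr|_{(S_n,T_n)=(0,0)}$ is a Lie-algebra homomorphism onto its image when restricted to the subspace of fields whose third and fourth components vanish after the reduction. First I would verify this: if $f,g\in\mathcal{V}_4$ both satisfy $f\bigr|_{(S_n,T_n)=(0,0)}=((\pi f)^T,0,0)^T$ and likewise for $g$, then because $(S_n,T_n)=(0,0)$ is preserved (the third and fourth components of $\K f,g\J$ vanish on the reduced locus), one gets $\pi\K f,g\J=\K\pi f,\pi g\J$, where the bracket on the left is in $\mathcal{V}_4$ and on the right in $\mathcal{V}_2$. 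This is essentially a chain-rule check on the Gateaux derivative and is routine.

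Next I would record that, by Theorem~\ref{T-reduce}, the even-order flows $K^{(2s)}$ and $\sigma^{(2s)}$ all lie in this good subspace, with $\pi K^{(2s)}=\overline{K}^{(2s)}$ and $\pi\sigma^{(2s)}=\overline{\sigma}^{(2s)}$; moreover the symmetries $\tau^{(2h,2s)}=2ht_{2h}K^{(2h+2s)}+\sigma^{(2s)}$, being $\mathbb{R}[t]$-linear combinations of even-order flows, also lie in it and map under $\pi$ to $\overline{\tau}^{(2h,2s)}=2ht_{2h}\overline{K}^{(2h+2s)}+\overline{\sigma}^{(2s)}$. The statement that these are symmetries of $\overline{u}_{n,t_{2h}}=\overline{K}^{(2h)}$ then follows: by Theorem~\ref{T-4.1}, $\{K^{(2s)}\}$ and $\{\tau^{(2h,2s)}\}$ are symmetries of $u_{n,t_{2h}}=K^{(2h)}$, i.e. they satisfy the defining relation \eqref{44.11}; applying $\pi$ to that relation, using $\pi$'s compatibility with $\frac{\tilde\partial}{\tilde\partial t}$ (which only touches explicit $t$-dependence and hence commutes with the $u$-restriction) and with the bracket, yields exactly \eqref{44.11} for the reduced flow. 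Finally, applying $\pi$ to the algebra relations \eqref{alg-sym-iso} — specializing $s,l$ to even integers $2s,2l$ and $m$ to $2h$ — produces \eqref{alg-sym-iso-2p}; note the structure constants $(s-l)$ in the four-potential algebra become $2(s-l)$ here because in \eqref{alg-sym-iso-2p} the superscripts are the actual integers $2s,2l$ while the index $s-l$ refers to $\tfrac{2s}{2}-\tfrac{2l}{2}$, and similarly $l\mapsto 2l$ in the middle relation. The generator claim is immediate from the generator set \eqref{52} after the same even-index specialization.

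The main obstacle — really the only nontrivial point — is establishing that $\pi$ is a bracket homomorphism on the relevant subspace, i.e. that no cross terms involving derivatives in the $S_n$ or $T_n$ directions spoil the identification; this rests on the block-diagonal structure $L^2\bigr|_{(S_n,T_n)=(0,0)}=\mathrm{diag}(\overline{L},H)$ found in the proof of Theorem~\ref{T-reduce}, which guarantees the reduced locus is invariant under all even-order flows and hence under their Lie brackets. Everything else is bookkeeping: tracking how the integer indices rescale by a factor of two and checking that the coefficient $2h$ in $\tau^{(2h,2s)}$ is consistent with the master-symmetry normalization $\overline{\sigma}^{(0)}=(2n+\tfrac12)\overline{K}^{(0)}$ used throughout Section~5. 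One should also remark that the two sets in \eqref{sym-iso-2p} are genuinely distinct and that $\overline{\mathcal{S}}$ is centerless, both inherited verbatim from $\mathcal{S}$ via the homomorphism $\pi$.
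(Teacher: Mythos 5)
Your proposal is correct and takes essentially the same route as the paper, which presents Theorem~\ref{T-5.2} as an immediate consequence of Theorem~\ref{T-4.1} combined with the closeness established in Theorem~\ref{T-reduce}. You simply make explicit the point the paper leaves implicit, namely that restriction to the invariant locus $(S_n,T_n)=(0,0)$ acts as a Lie-bracket homomorphism on the even-order flows, so the symmetry property, the structure constants (with the factor $2$ coming from substituting the doubled indices), and the generator set all descend directly.
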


\begin{theorem}
\label{T-5.3}
Any given member $\overline{u}_{n, t_{2h}}=\overline{\sigma}^{(2h)}$ in the non-isospectral two-potential AL hierarchy \eqref{hie-non-2p}
can have two sets of symmetries
\begin{subequations}
\label{sym-non-2p}
\begin{align}
&\overline{\eta}^{(2h, s)}=\sum^{s}_{j=0}C_{s}^{j}(2ht_{2h})^{s-j}\overline{\sigma}^{(2h-2jh)}~~~~(s=0, 1, 2, \cdots),\label{56}\\
&\overline{\gamma}^{(2h, s)}
=\sum^{s}_{j=0}C_{s}^{j}(2ht_{2h})^{s-j}\overline{K}^{(-2jh)}~~~~(s=0,
1, 2, \cdots),\label{57}
\end{align}
\end{subequations}
which  form a centerless KMV algebra $\overline{\mathcal{H}}$ with
structure
\begin{subequations}
\label{alg-sym-non}
\begin{align}
\K  \overline{\eta}^{(2h, s)}, \overline{\eta}^{(2h, l)}\J   & =2(l-s)h \overline{\eta}^{(2h, s+l-1)},\label{58}\\
\K  \overline{\gamma}^{(2h, s)}, \overline{\gamma}^{(2h, l)}\J   & =0,\label{59}\\
\K  \overline{\eta}^{(2h, s)}, \overline{\gamma}^{(2h,
l)}\J   & =2lh\overline{\gamma}^{(2h,s+l-1)},\label{60}
\end{align}
\end{subequations}
and generators
\begin{equation}
\{\overline{\eta}^{(2h, 0)},~~~~ \overline{\eta}^{(2h, 3)},
~~~~\overline{\gamma}^{(2h, 1)}\}.
\end{equation}
\end{theorem}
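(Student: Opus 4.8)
The plan is to deduce Theorem~\ref{T-5.3} as a direct corollary of Theorem~\ref{T-4.2} combined with the closeness statement already established in Theorem~\ref{T-reduce}. The essential point is that the restriction map $f\mapsto f|_{(S_n,T_n)=(0,0)}$, applied to the even-order members, is a Lie-algebra homomorphism onto its image when the relevant flows stay within the subspace $\{(\overline{f}^T,0,0)^T\}$. First I would recall that, by the closeness proved in Theorem~\ref{T-reduce}, for every $h\in\mathbb{Z}$ the flows $\sigma^{(2h)}$ and $K^{(2h)}$ restrict to $(\overline{\sigma}^{(2h)T},0,0)^T$ and $(\overline{K}^{(2h)T},0,0)^T$ respectively, and moreover $L^2|_{(S_n,T_n)=(0,0)}$ is block-diagonal with upper-left block $\overline{L}$; hence the whole even-order subhierarchy, together with the symmetries $\eta^{(m,s)}$ and $\gamma^{(m,s)}$ of Theorem~\ref{T-4.2} taken at $m=2h$, lives in the reduced subspace and restricts coordinatewise to $\overline{\eta}^{(2h,s)}$ and $\overline{\gamma}^{(2h,s)}$ as defined in \eqref{sym-non-2p}.

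Next I would verify that the Lie product $\K\cdot,\cdot\J$ is compatible with the restriction: if $f,g\in\mathcal{V}_4$ both lie in the reduced subspace and their Gateaux derivatives in reduced directions again produce reduced vectors (which is exactly what closeness guarantees for the flows at hand), then $\K f,g\J|_{(S_n,T_n)=(0,0)} = \K \overline{f},\overline{g}\J$ computed inside $\mathcal{V}_2$. Granting this, the algebra relations \eqref{alg-sym-non} for $\overline{\mathcal{H}}$ follow immediately by restricting the relations \eqref{alg-sym-non} of Theorem~\ref{T-4.2} with $m$ specialized to $2h$: the right-hand sides $(l-s)m\,\eta^{(m,s+l-1)}$, $0$, and $lm\,\gamma^{(m,s+l-1)}$ become $2(l-s)h\,\overline{\eta}^{(2h,s+l-1)}$, $0$, and $2lh\,\overline{\gamma}^{(2h,s+l-1)}$ respectively. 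Finally, that $\overline{u}_{n,t_{2h}}=\overline{\sigma}^{(2h)}$ actually admits $\overline{\eta}^{(2h,s)}$ and $\overline{\gamma}^{(2h,s)}$ as symmetries follows from restricting the symmetry condition $\widetilde{\partial}\sigma/\widetilde{\partial}t = \K K,\sigma\J$ (here with $K=\sigma^{(2h)}$) along the same reduction, since $t_{2h}$ is unaffected by setting $(S_n,T_n)=(0,0)$ and the explicit $t$-dependence in $\eta^{(m,s)},\gamma^{(m,s)}$ is polynomial in $mt_m$. The generating set $\{\overline{\eta}^{(2h,0)},\overline{\eta}^{(2h,3)},\overline{\gamma}^{(2h,1)}\}$ is inherited in the same way from \eqref{61}.

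The main obstacle I anticipate is the bookkeeping in the homomorphism step: one must check that the Gateaux derivative of a reduced flow in a reduced direction does not generate components along the deleted potentials $S_n,T_n$, i.e.\ that differentiation commutes with the restriction on the nose for all the objects involved. For the flows themselves this is precisely Theorem~\ref{T-reduce}; for the explicitly $t$-dependent symmetries $\eta^{(m,s)}$ and $\gamma^{(m,s)}$ one additionally notes that they are finite linear combinations, with coefficients depending only on $t_m$, of flows that already reduce well, so no new difficulty arises. I would therefore organize the proof as: (i) cite Theorem~\ref{T-reduce} for closeness of even-order flows and the block structure of $L^2$; (ii) observe that restriction is a Lie-algebra homomorphism on the relevant finite-dimensional span, and that it carries $\eta^{(2h,s)},\gamma^{(2h,s)},K^{(2h)}$ to $\overline{\eta}^{(2h,s)},\overline{\gamma}^{(2h,s)},\overline{K}^{(2h)}$; (iii) restrict the symmetry property and the algebra relations of Theorem~\ref{T-4.2} to obtain \eqref{sym-non-2p} and \eqref{alg-sym-non}; (iv) restrict \eqref{61} to get the generators. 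Since all nontrivial analytic content is already packaged in the earlier theorems, the argument is short, and—consistent with the remark that proofs in this section parallel those of \cite{ZDJ-06-PLA}—the routine verifications may be omitted.
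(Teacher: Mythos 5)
Your proposal is correct and follows essentially the same route as the paper, which obtains Theorem \ref{T-5.3} directly from Theorem \ref{T-4.2} together with the closeness of the even-order reduction established in Theorem \ref{T-reduce}, offering no further detail beyond the remark that ``the closeness also guarantees reduction of symmetries and their algebraic structures.'' Your explicit check that restriction to $(S_n,T_n)=(0,0)$ acts as a Lie-algebra homomorphism on the even-order flows (and hence carries the symmetry condition and the relations of $\mathcal{H}$ to those of $\overline{\mathcal{H}}$ with $m=2h$) simply fills in the routine step the paper leaves implicit.
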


\section{Conclusions}

We have derived isospectral and non-isospectral four-potential AL
hierarchies and their recursion operator. Both hierarchies have been
shown to have infinitely many symmetries and Lie algebras which
belong to centerless KMV algebras, respectively. These structures
are derived by means of zero-curvature representations of flows.
What is special is that these structures cover both positive and
negative order four-potential AL hierarchies, and each member in the
non-isospectral AL hierarchy also possesses two sets of symmetries
which compose a centerless KMV algebra too. These two points make
the algebraic structure of the AL hierarchies quite different from
many other Lax integrable systems. This is due to the symmetrical
form (in terms of $z$ and $z^{-1}$) of the AL spectral matrix (see
\eqref{4p-new}). Besides, it is now clear for the relation between
four-potential and two-potential AL hierarchies. All the even order
members in the four-potential isospectral and non-isospectral
hierarchies can be reduced to two-potential case by directly taking
$(S_n,T_n)=(0,0)$. The new recursion operator for two-potential case
is $L^2$. Meanwhile, the reduction keeps the algebraic structures of
symmetries invariant. The procedure and main results are possible to
apply to Ablowitz-Ladik systems with vector potentials and this will
be discussed elsewhere. This paper will be continued by the second
part in which we will discuss symmetries and recursion operator for
the integrable discrete nonlinear Schr\"odinger equation and the
discrete AKNS hierarchy.

\vspace{1cm}

\section*{Acknowledgement}
This project is supported by the National Natural Science Foundation
of China (10671121) and Shanghai Leading Academic Discipline
Project (No.J50101).

{\small

}

\begin{appendix}
\section{The first few flows and their Lax pairs}\label{A-1}
For the isospectral four-potential AL hierarchy \eqref{hie-iso}, when $m=1$ we have
\begin{align}
u_{n, t_{1}}=K^{(1)}=\left(\begin{array}{cccc}
 (1-Q_{n}R_{n})S_{n}\\-(1-Q_{n}R_{n})T_{n-1}\\ (1-S_{n}T_{n})Q_{n+1}\\ -(1-S_{n}T_{n})R_{n}
 \end{array}\right).
 \end{align}
In its Lax pair
\begin{align}
 V_{n}^{(1)}= \left(
\begin{array}{cc} -\frac{1}{2}Q_{n}T_{n-1}+\frac{1}{2}z^{2} & Q_{n}z\\
T_{n-1}z& \frac{1}{2}Q_{n}T_{n-1}-\frac{1}{2}z^{2}
\end{array}\right).
\end{align}
For $m=-1$, we have
\begin{align}
u_{n, t_{-1}}=K^{(-1)}=\left(\begin{array}{cccc}
 (1-Q_{n}R_{n})S_{n-1}\\-(1-Q_{n}R_{n})T_{n}\\ (1-S_{n}T_{n})Q_{n}\\ -(1-S_{n}T_{n})R_{n+1}
 \end{array}\right),
 \end{align}
and
\begin{align}
 V_{n}^{(-1)}= \left(
\begin{array}{cc} -\frac{1}{2}R_{n}S_{n-1}+\frac{1}{2}z^{-2} & -S_{n-1}z^{-1}\\
-R_{n}z^{-1}& \frac{1}{2}R_{n}S_{n-1}-\frac{1}{2}z^{-2}
\end{array}\right).
\end{align}
For the non-isospectral four-potential AL hierarchy \eqref{hie-non}, when $m=1$ we have
\begin{equation}
u_{n, t_{1}}=\sigma^{(1)}=\left(\begin{array}{cccc}
 (2n+\frac{3}{2})(1-Q_{n}R_{n})S_{n}-Q_{n}(E-1)^{-1}(Q_{n+1}T_{n}+S_{n}R_{n})\\
 -(2n-\frac{1}{2})(1-Q_{n}R_{n})T_{n-1}+R_{n}(E-1)^{-1}(Q_{n+1}T_{n}+S_{n}R_{n})\\
 (2n+\frac{5}{2})(1-S_{n}T_{n})Q_{n+1}-S_{n}(E-1)^{-1}(Q_{n+1}T_{n}+S_{n}R_{n})-S_{n}^{2}R_{n}\\
 -(2n+\frac{1}{2})(1-S_{n}T_{n})R_{n}+T_{n}(E-1)^{-1}(Q_{n+1}T_{n}+S_{n}R_{n})+S_{n}T_{n}R_{n}
 \end{array}\right).
\end{equation}
In its Lax pair
\begin{equation}
V_{n}^{(1)}= \left(
\begin{array}{cc} -(2n+\frac{1}{2})Q_{n}T_{n-1}-(E-1)^{-1}(T_{n-1}Q_{n}+R_{n}S_{n})+nz^{2} & (2n+\frac{1}{2})Q_{n}z\\
(2n-\frac{1}{2})T_{n-1}z& -nz^{2}
\end{array}\right).
\end{equation}
For $m=-1$,
\begin{align}
u_{n, t_{-1}}=\sigma^{(-1)}=\left(\begin{array}{cccc}
 (2n-\frac{1}{2})(1-Q_{n}R_{n})S_{n-1}-Q_{n}(E-1)^{-1}(R_{n+1}S_{n}+T_{n}Q_{n})\\
 -(2n+\frac{3}{2})(1-Q_{n}R_{n})T_{n}+R_{n}(E-1)^{-1}(R_{n+1}S_{n}+T_{n}Q_{n})\\
 (2n+\frac{1}{2})(1-S_{n}T_{n})Q_{n}-S_{n}(E-1)^{-1}(R_{n+1}S_{n}+T_{n}Q_{n})-Q_{n}S_{n}T_{n}\\
 -(2n+\frac{5}{2})(1-S_{n}T_{n})R_{n+1}+T_{n}(E-1)^{-1}(R_{n+1}S_{n}+T_{n}Q_{n})+T_{n}^{2}Q_{n}
 \end{array}\right),
 \end{align}
and
\begin{align}
 V_{n}^{(-1)}= \left(
\begin{array}{cc} nz^{-2} & -(2n-\frac{1}{2})S_{n-1}z^{-1}\\
-(2n+\frac{1}{2})R_{n}z^{-1}&
(2n+\frac{1}{2})R_{n}S_{n-1}+(E-1)^{-1}(S_{n-1}R_{n}+T_{n}Q_{n})-nz^{-2}
\end{array}\right).
\end{align}

\section{Theorem 1 in Ref.\cite{ZDJ-02-JPA}}\label{A-2}
\begin{theorem}
\label{T-B.1} Suppose that the linear problem
\begin{equation}
\Phi_{n+1}=M_n\Phi_n \label{B-1}
\end{equation}
satisfies the following conditions:\\
(1). matrix equation
\begin{equation}
M_n' [X] =(EN_n)M_n -M_nN_n \label{B-2}
\end{equation}
possesses a unique couple of nonzero solutions $X(u(t,n))\in
\mathcal{V}_{s}$ and $N_n=N(z,u(t,n)) \in \mathcal{Q}_{m}(z)$
satisfying
$[M_n,N_n]|_{u_n=0}=0$, \\
(2). for any given $Y(u(t,n))\neq 0 \in \mathcal{V}_{s}$, there
exist solutions $X(u(t,n))\in \mathcal{V}_{s}$ and $N_n \in
\mathcal{Q}_{m}(z)$ satisfying
\begin{equation}
M'_n [X-z^{\alpha}Y]=(EN_n)M_n -M_nN_n ,~ ~ N_n|_{u_n=0}=0,
\end{equation}
where $\alpha$ is a constant related to \eqref{B-1}, then the
following results hold:\\
(1). there exist the following Lax representations for
isospectral flows $\{K\ui{l}\}$:\\
\begin{equation}
M'_n [K\ui{l}]=(EN_n^{(l)})M_n -M_nN_n^{(l)}, ~~l=0,1,\cdots,
\end{equation}
(2). there exists a unique recursion operator $L$  such that
\begin{equation}
u_{n,t}=K\ui{l}=L^l K\ui{0}, ~~l=0,1,\cdots, \label{B-4}
\end{equation}
and $L$ is a strong and hereditary symmetry of each member in the
above hierarchy.
\end{theorem}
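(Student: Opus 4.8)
The plan is to run, at the abstract level of hypotheses~(1)--(2), the same chain of reasoning that in the body of the paper carried us from the zero-curvature representations of Proposition~\ref{p1} to the hierarchy \eqref{hie-iso}, the recursion map \eqref{map:L}, and the hereditariness statement. First I would construct the flows and their Lax matrices simultaneously by induction on $l\ge0$. Take $\bigl(K^{(0)},N^{(0)}_n\bigr)$ to be the unique nonzero pair furnished by condition~(1), so that \eqref{B-2} holds with $X=K^{(0)}$, $N_n=N^{(0)}_n$, and $[M_n,N^{(0)}_n]|_{u_n=0}=0$. Assuming $K^{(l)}\in\mathcal{V}_{s}$, $N^{(l)}_n\in\mathcal{Q}_{m}(z)$ with $M'_n[K^{(l)}]=(EN^{(l)}_n)M_n-M_nN^{(l)}_n$, apply condition~(2) with $Y=K^{(l)}$ to obtain $X\in\mathcal{V}_{s}$ and $\widetilde N_n\in\mathcal{Q}_{m}(z)$, $\widetilde N_n|_{u_n=0}=0$, satisfying $M'_n[X-z^{\alpha}K^{(l)}]=(E\widetilde N_n)M_n-M_n\widetilde N_n$; then put $K^{(l+1)}:=X$ and $N^{(l+1)}_n:=\widetilde N_n+z^{\alpha}N^{(l)}_n$. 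Adding $z^{\alpha}$ times the inductive identity yields $M'_n[K^{(l+1)}]=(EN^{(l+1)}_n)M_n-M_nN^{(l+1)}_n$, which is conclusion~(1); moreover $N^{(l)}_n|_{u_n=0}=z^{\alpha l}N^{(0)}_n|_{u_n=0}$, so all the $N^{(l)}_n|_{u_n=0}$ commute with $M_n|_{u_n=0}$ and with one another.

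Second, I would define the recursion operator by $LY:=X$, where $(X,\widetilde N_n)$ is the pair attached to $Y$ by condition~(2). Well-definedness and linearity of $L$ require uniqueness of that pair --- the assertion (built into the hypotheses of Ref.\cite{ZDJ-02-JPA}, and resting there on $M'_n$ being an injective homomorphism) that $M'_n[X]=(EV_n)M_n-M_nV_n$ with $V_n|_{u_n=0}=0$ forces $X=0$ and $V_n=0$; this is the abstract counterpart of the Lemma following Proposition~\ref{p1}. Given it, linearity follows by scaling and subtracting the defining equations, whose two sides are linear in $(X,\widetilde N_n)$ with $Y$ as input, and uniqueness of $L$ is the same statement. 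Since $LK^{(l)}=K^{(l+1)}$ by construction, $K^{(l)}=L^{l}K^{(0)}$, which is \eqref{B-4}. (If the constant $\alpha$ of \eqref{B-1} is negative, the same correspondence reads $X=L^{-1}Y$ and one passes to the inverse.)

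Third comes the assertion that $L$ is a strong and hereditary symmetry of each member $u_{n,t}=K^{(l)}$. Feeding conclusion~(1) into the identity $M'_n[\K f,g\J]=(M'_n[f])'[g]-(M'_n[g])'[f]$ and using that the $N^{(l)}_n|_{u_n=0}$ commute (exactly as for Lemmas~\ref{lem-1}--\ref{lem-2}) gives $M'_n[\K K^{(l)},K^{(m)}\J]=(E\langle N^{(l)}_n,N^{(m)}_n\rangle)M_n-M_n\langle N^{(l)}_n,N^{(m)}_n\rangle$ with the right-hand coefficient vanishing at $u_n=0$, whence the uniqueness lemma forces $\K K^{(l)},K^{(m)}\J=0$; in particular every $K^{(l)}$ is a symmetry of every member. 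To see that $L$ is a strong symmetry of the seed $u_{n,t}=K^{(0)}$, i.e. $L'[K^{(0)}]-[K^{(0)\prime},L]=0$ (cf. \eqref{53} with $m=0$), I would take the Gateaux derivative of $M'_n[LY-z^{\alpha}Y]=(E\widetilde N_Y)M_n-M_n\widetilde N_Y$ along $K^{(0)}$, substitute conclusion~(1) for $M'_n[K^{(0)}]$, and regroup via the same identity; the standard computation packs every leftover term into $(E\Xi_Y)M_n-M_n\Xi_Y$ with $\Xi_Y|_{u_n=0}=0$ (it uses $\widetilde N_Y|_{u_n=0}=0$ and $K^{(0)}|_{u_n=0}=0$), so the uniqueness lemma yields $\bigl(L'[K^{(0)}]-[K^{(0)\prime},L]\bigr)Y=0$ for all $Y$. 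Heredity follows by the same differentiate-and-peel pattern, giving the hereditary identity $L'[LY]Z-L'[LZ]Y=L\bigl(L'[Y]Z-L'[Z]Y\bigr)$; and heredity together with strong-symmetry-of-the-seed then propagates, by the standard criterion of Fuchssteiner--Fokas \cite{Fuchssteiner-1981}, to $L'[K^{(l)}]=[K^{(l)\prime},L]$ for all $l$, which completes conclusion~(2).

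The two delicate points --- and where the argument is most likely to break --- are the injectivity/uniqueness lemma, on which the definition of $L$ and every ``peel off the zero-curvature part'' step depends, and the bookkeeping in the third paragraph, where after differentiating the recursion representation one must genuinely verify that the residual matrix vanishes at $u_n=0$ before the uniqueness lemma may be applied. That verification is exactly where the commuting of the $N^{(l)}_n|_{u_n=0}$, the normalization $\widetilde N_Y|_{u_n=0}=0$, and the identity \eqref{identity-1} all get used up.
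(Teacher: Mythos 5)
The paper itself contains no proof of this theorem: it is quoted verbatim in Appendix~\ref{A-2} from Ref.~\cite{ZDJ-02-JPA} ``for self-containedness,'' so there is no in-paper argument to compare against. Judged on its own, your reconstruction follows exactly the strategy of that reference and of the body of this paper (Proposition~\ref{p1}, the Lemma after it, and Sec.~3.4): build $(K^{(l)},N_n^{(l)})$ inductively from condition (2) with $Y=K^{(l)}$ and $N_n^{(l+1)}=\widetilde N_n+z^{\alpha}N_n^{(l)}$; define $L$ by the correspondence $Y\mapsto X$; and obtain strong symmetry and heredity by Gateaux-differentiating the recursion representation and peeling off the zero-curvature part via the injectivity lemma. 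The one substantive caveat --- which you correctly flag rather than paper over --- is that the uniqueness statement needed to make $L$ well defined (that $M_n'[X]=(EV_n)M_n-M_nV_n$ with $V_n|_{u_n=0}=0$ forces $X=0$, $V_n=0$) is not literally one of the stated hypotheses (1)--(2); conditions (1)--(2) assert existence plus a different uniqueness, and in \cite{ZDJ-02-JPA} the needed lemma is derived separately from $M_n'$ being an injective homomorphism together with a Laurent-expansion argument in $z$. Provided that lemma is granted (as it is in this paper under the condition \eqref{cd}), your induction, the identity $N_n^{(l)}|_{u_n=0}=z^{\alpha l}N_n^{(0)}|_{u_n=0}$ guaranteeing the commutator terms vanish at $u_n=0$, and the propagation of strong symmetry along the hierarchy via heredity are all correct; the outline is sound and essentially identical in route to the source it reconstructs.
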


\end{appendix}

\end{document}